\def\K{\mathcal{K}}
\def\Kinf{\K_{\infty}}
\def\KL{\mathcal{KL}}
\def\R{\mathbb{R}}
\def\N{\mathbb{N}}
\def\X{\mathcal{X}}
\def\E{\mathcal{E}}
\def\U{\mathcal{U}}
\def\D{\mathcal{D}}
\def\S{\mathcal{S}}
\def\dfn{:=}
\def\ms{\medspace}
\def\comp{{\scriptstyle\,\circ}\,}
\def\Ki{\K_\infty}
\def\T{\mathcal{T}}
\def\mer{\hfill $\circ$}
\def\qed{$\hfill\blacksquare$}
\newtheorem{theorem}{Theorem}[section]
\newtheorem{definition}[theorem]{Definition}
\newtheorem{remark}[theorem]{Remark}
\newtheorem{lemma}[theorem]{Lemma}
\newtheorem{aplemma}{Lemma}
\theoremstyle{remark}
\newtheorem{claim}{Claim}
\journal{System \& Control Letters}
\begin{document}

\begin{frontmatter}

%% Title, authors and addresses

%% use the tnoteref command within \title for footnotes;
%% use the tnotetext command for theassociated footnote;
%% use the fnref command within \author or \address for footnotes;
%% use the fntext command for theassociated footnote;
%% use the corref command within \author for corresponding author footnotes;
%% use the cortext command for theassociated footnote;
%% use the ead command for the email address,
%% and the form \ead[url] for the home page:
%% \title{Title\tnoteref{label1}}
%% \tnotetext[label1]{}
%% \author{Name\corref{cor1}\fnref{label2}}
%% \ead{email address}
%% \ead[url]{home page}
%% \fntext[label2]{}
%% \cortext[cor1]{}
%% \address{Address\fnref{label3}}
%% \fntext[label3]{}

\title{Characterization of semiglobal stability properties for discrete-time models \\ of non-uniformly sampled nonlinear systems}

\author{Alexis J. Vallarella}
\author{Hernan Haimovich}

\address{Centro Internacional Franco-Argentino de Ciencias de la Informaci\'on y de Sistemas (CIFASIS),
CONICET-UNR, Ocampo y Esmeralda, (S2000EZP) Rosario, Argentina. {\texttt{\{vallarella,haimovich\}@cifasis-conicet.gov.ar}}}

\begin{abstract}
{\color{black}
Discrete-time models of non-uniformly sampled nonlinear systems under zero-order hold relate the next state sample to the current state sample, (constant) input value, and sampling interval. The exact discrete-time model, that is, the discrete-time model whose state matches that of the continuous-time nonlinear system at the sampling instants may be difficult or even impossible to obtain. In this context, one approach to the analysis of stability is based on the use of an approximate discrete-time model and a bound on the mismatch between the exact and approximate models. This approach requires three conceptually different tasks: i) ensure the stability of the (approximate) discrete-time model,
ii) ensure that the stability of the approximate model carries over to the exact model, iii) if necessary, bound intersample behaviour. Existing conditions for ensuring the stability of a discrete-time model as per task i) have some or all of the following drawbacks: are only sufficient but not necessary; do not allow for varying sampling rate; cannot be applied in the presence of state-measurement or actuation errors. In this paper, we overcome these drawbacks by providing characterizations of, i.e.~necessary and sufficient conditions for, two stability properties: semiglobal asymptotic stability, robustly with respect to bounded disturbances, and semiglobal input-to-state stability, where the (disturbance) input may successfully represent state-measurement or actuation errors. Our results can be applied when sampling is not necessarily uniform.}
\end{abstract}

\begin{keyword}
%% keywords here, in the form: keyword \sep keyword
%% PACS codes here, in the form: \PACS code \sep code
%% MSC codes here, in the form: \MSC code \sep code
%% or \MSC[2008] code \sep code (2000 is the default)
  Sampled-data \sep
  nonlinear systems \sep
  non-uniform sampling \sep
  input-to-state stability (ISS) \sep
  {\color{black}discrete-time models}.
\end{keyword}
\end{frontmatter}

\section{Introduction}
\label{sec:introduction}

\begin{color}{black}

Most of the existing stability results for non-uniformly sampled systems deal with linear systems \cite{hetel2017,fujioka2009,suh2008}. Some results applicable to different classes of nonlinear systems were given in \cite{BERNUAU2017,Monaco2015,mattioni2017,karafyllis2009global}. In \cite{BERNUAU2017}, stabilization of homogeneous nonlinear systems with sampled-data inputs is analyzed in  by means of an emulation approach. In \cite{Monaco2015,mattioni2017}, multi-rate sampled-data stabilization via immersion and invariance for nonlinear systems in feedback form was developed. \cite{karafyllis2009global} presents sufficient conditions for uniform input-to-output and input-to-state stability for closed-loop systems with zero-order hold.

Strategies where sampling is inherently non-uniform and which have application to nonlinear systems are those of event- and self-triggered control.
In an event-triggered control strategy, the control action is computed based on the continuous-time 
system model (with the aid of a Lyapunov function, e.g.) and current state or output measurements, 
applied to the plant, and held constant until a condition that triggers the control action update
becomes true \cite{tabuad_tac07,Heemels2012}. The triggering condition requires continuous monitoring of some system variables,
and thus this type of event-triggered control does not exactly constitute
a sampled-data strategy. Other event-triggered strategies that verify the condition only periodically
have been developed for linear systems \cite{heemels2013,heemels2013model}. Self-triggered control \cite{Heemels2012,velfue_rtss03,Anta2010}, 
in addition to computing the current control action based on the continuous-time model, also computes the time instant at which the next control update will occur
requiring only sampled measurements.

Some approaches to stability analysis and control design for nonlinear systems under sampling and hold are based on the use of a discrete-time model for the sampled system. These approaches are referred to as Discrete-Time Design (DTD), or Sampled-Data Design (SDD) if, in addition, inter-sample behaviour is taken into account \cite{NesicSCL99,monaco2001issues}. As opposed to the linear-system case, the differential equations that describe a continuous-time nonlinear system's dynamics may be difficult or impossible to solve in closed form, and hence a discrete-time model exactly matching the state of the continuous-time system at the sampling instants is usually unavailable.
If the continuous-time system is input-affine then
the exact discrete-time model can be approximated to desired accuracy via the procedure in \cite{monaco2005differential,monaco2007advanced}.
Thus, DTD or SDD for nonlinear systems are usually based on an \emph{approximate} discrete-time model.

Interesting work on DTD for nonlinear systems under uniform sampling appears in \cite{NesicSCL99,NesicTAC04}. The results of \cite{NesicSCL99,NesicTAC04} are of the following conceptual form: given a specific bound on the mismatch between the 
exact and approximate discrete-time models (which can be known without having to compute the exact model) then some stability property on the approximate closed-loop model will carry over (in a practical sense) to the exact model for all sufficiently small sampling periods. These results have been extended to provide input-to-state and integral-input-to-state
stability results \cite{NesicLaila_TAC2002,NesicAngeliTAC2002}, to observer design \cite{Arcak20041931}, and to networked control systems \cite{van2012discrete}. All of these results are specifically suited to the case when sampling is uniform during operation or, in the case of \cite{van2012discrete}, when a nominal sampling period can be defined. Some extensions to the non-uniform sampling case were given in \cite{AADECA16,RPIC2017,valhai_arxiv18} that are also based on an approximate discrete-time model. Specifically, \cite{AADECA16} gives preliminary results to ensure the practical asymptotic stability of the exact discrete-time model under non-uniform sampling, \cite{valhai_arxiv18} gives a sufficient condition for the semiglobal practical input-to-state stability of the exact discrete-time model with respect to state-measurement errors, and \cite{RPIC2017} shows that a global 
stability property under uniform sampling, namely $(\beta,\R^n)$-stability, 
is equivalent to the analogous property under non-uniform sampling.

The aforementioned DTD approach requires two conceptually different tasks: i) ensure the stability of the (approximate) discrete-time model, and
ii) ensure that the stability of the approximate model carries over to the exact model. For the SDD approach, the following task should be added: iii) bound intersample behaviour. The existing conditions for ensuring the stability of a discrete-time model as per task i) have some or all of the following drawbacks: are only sufficient but not necessary; do not allow for varying sampling rate; cannot be applied in the presence of state-measurement or actuation errors. 

This paper addresses 
stability analysis for discrete-time models of sampled-data nonlinear
systems under the aforementioned DTD approach.
We characterize, i.e.~give necessary and sufficient conditions for, 
two stability properties: semiglobal asymptotic stability,
robustly with respect to bounded disturbances, 
and semiglobal input-to-state stability,
where the (disturbance) input may successfully represent state-measurement or actuation errors,
both specifically suited to non-uniform sampling.
In this context, the contribution of the current paper
is to overcome all of the drawbacks relating to task i) and mentioned in the previous paragraph. 
Our results thus apply to a discrete-time model of a sampled nonlinear system,
irrespective of how accurate this model may be. If the discrete-time model is only approximate, 
then our results can be used in conjunction with the results in \cite{valhai_arxiv18} in order to conclude
about the (practical) stability of the (unknown) exact model, as per task ii).

The motivation for the two stability properties characterized in the current paper comes in part from the fact that a discrete-time control law that globally stabilizes the exact discrete-time model under perfect state knowledge may cause some trajectories to be divergent under bounded state-measurement errors, as shown in \cite{valhai_arxiv18}. The difficulty in characterizing the robust semiglobal stability and semiglobal input-to-state stability properties considered (see Section~\ref{sec:preliminaries} for the precise definitions) is mainly due to their semiglobal nature and not so much to the fact that sampling may be non-uniform. The properties considered are semiglobal because the maximum sampling period for which stability holds may depend on how large the initial conditions are. This situation causes our derivations and proofs to become substantially more complicated than existing ones.
\end{color}

The organization of this paper is as follows. 
This section ends with a brief summary of the notation employed. 
In Section~\ref{sec:preliminaries} we state the problem and the required definitions and properties. 
Our main results are given in Section~\ref{sec:mainresults}.
An illustrative example is provided in Section~\ref{sec:examples} 
and concluding remarks are presented in Section~\ref{sec:conclusions}.
The appendix contains the proofs of some of the presented lemmas.

\textbf{Notation:}
$\R$, $\R_{\ge 0}$, $\N$ 
and $\N_0$ denote the sets of real, nonnegative real, natural and nonnegative integer numbers, respectively. 
We write $\alpha \in \K$ if $\alpha : \R_{\ge 0} \to \R_{\ge 0}$ is strictly increasing, 
continuous and $\alpha(0)=0$. We write $\alpha \in \Kinf$ if $\alpha\in\K$ and $\alpha$ is unbounded. 
We write $\beta \in \KL$ if $\beta : \R_{\ge 0} \times \R_{\ge 0} \to \R_{\ge 0}$, $\beta(\cdot,t)\in \K$ for all $t\ge 0$,
and $\beta(s,\cdot)$ is strictly decreasing asymptotically to $0$ for every $s$.
We denote the Euclidean norm of a vector $x \in \R^n$ by $|x|$. 
We denote an infinite sequence as $\{T_i\}:=\{T_i\}_{i=0}^{\infty}$. 
For any sequences $\{T_i\} \subset \R_{\ge 0}$ and $\{e_i\} \subset \R^m$, and any $\gamma\in\K$,
we take the following conventions: $\sum_{i=0}^{-1} T_i = 0$ and $\gamma(\sup_{0\le i\le -1}|e_i|) = 0$. 
Given a real number $T>0$ we denote by 
 \begin{color}{black}
$\Phi(T):=\{ \{T_i\} : \{T_i\} \text{ is such that }  T_i \in (0,T) \text{ for all } i\in \N_0 \}$
\end{color}
the set of all sequences of real numbers in the open interval $(0,T)$. 
For a given sequence we denote the norm $\|\{x_i\}\|:= \sup_{i\geq0} |x_i|$.

\section{Preliminaries}
\label{sec:preliminaries}

\begin{color}{black}

\subsection{Problem statement}
\label{sec:problem_statement}
\begin{color}{black}
We consider discrete-time systems that arise when modelling non-uniformly sampled continuous-time nonlinear systems of the form
\begin{equation}
\label{eq:cs}
\begin{split}
 \dot{x}&=f(x,u),\quad  x(0)=x_0, 
\end{split}
\end{equation}
under zero-order hold,
where $x(t) \in \R^n$, $u(t) \in \R^m$ are the state and control vectors respectively. % and $f(0,0)=0$.
We consider that the sampling instants % state measurements become available at time instants
$t_k$, $k\in \N_0$, satisfy $t_0 = 0$ and $t_{k+1} = t_k + T_k$, where $\{T_k\}_{k=0}^{\infty}$ is the sequence of corresponding sampling periods.
As opposed to the uniform sampling case where $T_k = T$ for all $k \in \N_0$,
we consider that the sampling periods may vary;
we refer to this situation as Varying Sampling Rate (VSR).
In addition, we assume that the current sampling period $T_k$ is known or determined at the current sampling instant $t_k$. This situation arises when the controller determines
the next sampling instant
according to a certain control strategy, such as in
self-triggered control;
we refer to this scheme as controller-driven sampling.
Due to zero-order hold, the continuous-time control signal $u$ is piecewise constant such that $u(t) = u(t_k) =: u_k$ for all $t\in [t_k,t_{k+1})$.
Given that the current sampling period $T_k$ is known or determined at the current sampling instant $t_k$, then the current control action $u_k$ may depend not only on the current state sample $x_k$ but also on $T_k$. If, in addition, state-measurement or actuation errors exist, then the true control action applied will also be affected by such errors. If we use $e_k$ to denote the considered error at the corresponding sampling instant, then we could have $u_k=U(x_k,e_k,T_k)$.

The class of discrete-time systems that arise when modelling a non-uniformly sampled continuous-time nonlinear system \eqref{eq:cs} under zero-order hold is thus of the form
\begin{align}
  \label{eq:fut}
  x_{k+1} &= F(x_k,u_k,T_k).
\end{align}
Our results apply to this class of discrete-time systems irrespective of whether the system model accurately describes the behaviour of some continuous-time system at the sampling instants or not. Of course, if the discrete-time model employed were the exact discrete-time model for some continuous-time system, then stability of the model could give some indication on the stability of the continuous-time system. Conditions on $f$ in (\ref{eq:cs}) for existence of the exact discrete-time model are given in \ref{app:dt-mod}. Since regrettably the exact discrete-time model is in general impossible to obtain, then approximate models should be used. Sufficient conditions for some stability properties to carry over from an approximate discrete-time model to the exact model were given in \cite{NesicTAC04} under uniform sampling and in \cite{valhai_arxiv18} for the controller-driven sampling case here considered. These conditions are based on bounds on the mismatch between the exact and approximate models and can be computed without having to compute the exact model.

As mentioned above, a control action $u_k$
computed from state measurements, having knowledge of the current sampling period and under the possible effect of state-measurement or actuation errors is of the form
\begin{equation}
  \label{eq:UxeT}
  u_k = U(x_k,e_k,T_k),
\end{equation}
where $e_k \in \R^q$ denotes the error and the dimension $q$ depends on the type of error. 
For example, if $e_k$ represents a state-measurement additive error, then $q=n$; 
if it represents actuation additive error, then $q=m$. 
Under (\ref{eq:UxeT}), the closed-loop model becomes
\begin{align}
  \label{eq:system1}
  x_{k+1} &=  F(x_k, U(x_k,e_k,T_k),T_k) =: \bar F(x_k,e_k,T_k)
\end{align}
which is once again on the form \eqref{eq:fut}.
We stress that a control law $u_k = \bar U(x_k,e_k)$
is also of the form (\ref{eq:UxeT}) and hence also covered by our results.
We will characterize two stability properties
for discrete-time models of the form \eqref{eq:system1}: 
robust semiglobal stability and semiglobal input-to-state stability.
For the sake of clarity of the proofs
we will use $d_k \in \R^p$ instead of $e_k$ to represent bounded disturbances that do not destroy asymptotic stability.
Given $D>0$, we define $\D:=\big\{\{d_i\} \subset \R^p : |d_i| \leq D, \forall i\in\N_0\big\}$, the set of all disturbance sequences whose norm is not greater than $D$. Thus, for our robust stability results, we will consider a discrete-time model of the form
\begin{align}
 \label{eq:sysd}
  x_{k+1} &=  \bar F(x_k,d_k,T_k), \quad \{d_i\}\in  \D.
\end{align}

\end{color}
\end{color}

\subsection{Stability properties for varying sampling rate}
\label{sec:preliminaries2}

\begin{color}{black}
The next definitions are extensions of stability properties in
\cite{NesicSCL99,sontag_tac89,Sontag95,Sontag952,NesicLaila_TAC2002}.
The first one can be seen as a robust and semiglobal (with respect to initial states) version of $(\beta,\R^n)$-stability of \cite{NesicSCL99}, suitable for the non-uniform sampling case. The second definition presents the discrete-time 
global, semiglobal and semiglobal practical versions of the input-to-state stability (ISS) for non-uniform sampling.
\end{color}
%=========================================================================
% BEGIN DEFINITION SEMIGLOBAL 
%=========================================================================
\begin{color}{black}
\begin{definition}
  \label{def:ss-vsr}
  The system \eqref{eq:sysd} is said to be \textit{Robustly Semiglobally Stable under Varying Sampling Rate} (RSS-VSR)
  if there exists a function $\beta \in \KL$ such that for every $M \geq0$ there exists $T^\blacktriangle=T^\blacktriangle(M)>0$ such that the solutions of 
  \eqref{eq:sysd}
  satisfy
  \begin{align}
    \label{eq:semiglobal_cond}
    |x_k|\leq \beta\left(|x_0|,\sum_{i=0}^{k-1} T_i \right)
  \end{align}
  for all\footnote{As explained under ``Notation'' in Section~\ref{sec:introduction}, for $k=0$ we interpret $\sum_{i=0}^{-1} T_i = 0$ 
  and $\gamma(\sup_{0\le i \le -1} |e_i| )=0$.} $k \in \N_0$, $\{T_i\} \in \Phi(T^\blacktriangle)$, $|x_0|\leq M$ and $\{d_i\} \in \D$.
\end{definition}

%=========================================================================
% END DEFINITION SEMIGLOBAL 
%=========================================================================
\begin{remark}
  \label{rem:Ttriangle}
  Without loss of generality, the function $T^\blacktriangle(\cdot)$ in Definition~\ref{def:ss-vsr} can be taken nonincreasing.
\end{remark}

The RSS-VSR property is semiglobal because the bound $T^\blacktriangle$ on the sampling periods may depend on how far from the origin the initial conditions may be (as quantified by $M$). 
If there exists $\beta \in \KL$ and $T^\blacktriangle>0$
such that \eqref{eq:semiglobal_cond} holds
for all $k \in \N_0$, $\{T_i\} \in \Phi(T^\blacktriangle)$ and $x_0 \in  \R^n$,
then the system is said to be globally Robustly Stable under VSR (RS-VSR). 
When disturbances are not present ($\mathcal{D} = \{0\}$, i.e. $D=0$), the RS-VSR property becomes $(\beta,\R^n)$-stability under VSR \cite{RPIC2017}. If in addition to lack of disturbances, uniform sampling is imposed ($T_k = T$ for all $k\in\N_0$), then RS-VSR becomes $(\beta,\R^n)$-stability \cite{NesicSCL99}. In \cite{RPIC2017}, it was shown that existence of $\beta\in \KL$ such that a system is $(\beta,\R^n)$-stable is equivalent to existence of $\tilde\beta\in\KL$ such that it is $(\tilde\beta,\R^n)$-stable under VSR.
\end{color}

%=========================================================================
% BEGIN DEFINITION S-ISS-VSR
%=========================================================================
\begin{definition} 
  \label{def:ISS}
  The system (\ref{eq:system1}) is said to be
  \begin{enumerate}
   \item {\color{black}ISS-VSR if there exist functions $\beta \in \mathcal{KL}$ and $\gamma \in \K_\infty$ 
   and a constant $T^\star>0$ such that the solutions of \eqref{eq:system1} satisfy
      \begin{align}
      \label{eq:S-ISS-VSR}
      |x_k|\leq \beta\left(|x_0|,\sum_{i=0}^{k-1}T_i\right)+\gamma\left(\sup_{0\le i \le k-1}|e_i| \right ),
    \end{align}
for all $k\in \N_0$, $\{T_i\} \in \Phi(T^\star)$, $x_0\in \R^n$ and $\{e_i\} \subset \R^p$.}
   \item \textit{Semiglobally ISS-VSR} (S-ISS-VSR) if there exist functions $\beta \in \mathcal{KL}$ and $\gamma \in \K_\infty$
    such that for every $M\geq0$ and $E\geq0$ there exists
    $T^\star=T^\star(M,E)>0$ such that the solutions of \eqref{eq:system1} satisfy \eqref{eq:S-ISS-VSR}
    for all $k\in \N_0$, $\{T_i\} \in \Phi(T^\star)$, $|x_0|\leq M$ and $\|\{e_i\}\|\leq E$.
    \item     \textit{Semiglobally Practically ISS-VSR} (SP-ISS-VSR) 
    if there exist functions $\beta \in \mathcal{KL}$ and 
    $\gamma \in \K_\infty$ such that for every $M\geq0$, $E\geq0$ and 
    $R>0$ there exists $T^\star=T^\star(M,E,R)>0$ such that the solutions of \eqref{eq:system1} satisfy
        \begin{align}
      \label{eq:SP-ISS-VSR}
      |x_k|\leq \beta\left(|x_0|,\sum_{i=0}^{k-1}T_i\right)+\gamma\left(\sup_{0\le i \le k-1}|e_i| \right )+R,
    \end{align}
     for all $k\in \N_0$, $\{T_i\} \in \Phi(T^\star)$, $|x_0|\leq M$ and $\|\{e_i\}\|\leq E$.
  \end{enumerate}
\end{definition}
%=========================================================================
% END DEFINITION S-ISS-VSR
%=========================================================================

Note that 
$\text{ISS-VSR } \Rightarrow \text{ S-ISS-VSR } \Rightarrow \text{ SP-ISS-VSR}$.

\begin{color}{black}
\begin{remark}
  \label{rem:Ttrnoninc}
  Without loss of generality, 
  the function $T^\star(\cdot,\cdot)$ in the definition 
  of S-ISS-VSR in Definition~\ref{def:ISS} 
  can be taken nonincreasing in each variable.
\end{remark}
\end{color}

\color{black}

\section{Main Results}
\label{sec:mainresults}
{\color{black}
In this section, we present characterizations of the RSS-VSR and S-ISS-VSR properties defined in Section~\ref{sec:preliminaries2}. In Lemma~\ref{lemma:equiva}, $\epsilon$-$\delta$ and Lyapunov-type characterizations
are given for the RSS-VSR property. The main difference between these characterizations and the existing characterizations of $(\beta,\R^n)$-stability \cite[Lemma~4]{NesicSCL99} and $(\beta,\R^n)$-stability under VSR \cite[Lemma~2]{RPIC2017} depend on the semiglobal nature of RSS-VSR. The proof of Lemma~\ref{lemma:equiva} is given in~\ref{A}.}
%
%======================================================================
% BEGIN LEMMA 31 (epsilon-delta CHARACTERIZATION SS-VSR-SI)
%======================================================================
\begin{lemma}
  \label{lemma:equiva}
  The following statements are equivalent:
  \begin{enumerate}
  \item The system \eqref{eq:sysd} is RSS-VSR. \label{it:RSSVSR}
  \item For every $M \geq 0$ there exists $T^\blacktriangle=T^\blacktriangle(M)>0$ so that\label{item:eps-delt}
   \begin{enumerate}[i)]
    \item \label{enum:lemequiv1} for all $\epsilon>0$, there exists $\delta = \delta(\epsilon) > 0$ ($\delta$ is independent of $M$) such that the solutions of (\ref{eq:sysd}) with $|x_0|\leq \min \{\delta,M\}$, $\{T_i\} \in \Phi(T^\blacktriangle)$ and $\{d_i\} \in \D$ satisfy $|x_k|\leq \epsilon$ for all $k\in \N_0$,    
    \item \label{enum:lemequiv3} for all $L\ge 0$, there exists $C=C(M,L)\ge 0$ such that the solutions of (\ref{eq:sysd}) with $|x_0|\leq M$, $\{T_i\} \in \Phi(T^\blacktriangle)$ and $\{d_i\} \in \D$ satisfy $|x_k|\leq C$, for all $k \in \N_0$ for which $\sum_{i=0}^{k-1}T_i\le L$, and
    \item \label{enum:lemequiv2} for all $\epsilon>0$, there exists $\mathcal{T} = \mathcal{T}(M,\epsilon) \ge 0$ 
    such that the solutions of (\ref{eq:sysd}) with $|x_0|\leq M$, $\{T_i\} \in \Phi(T^\blacktriangle)$ and $\{d_i\} \in \D$ satisfy $|x_k|\leq \epsilon$, for all $k \in \N_0$ for which $\sum_{i=0}^{k-1}T_i\geq \mathcal{T}$.
    \end{enumerate}
     \item \label{it:boundedV} There exist $\alpha_1, \alpha_2, \alpha_3 \in \K_{\infty}$ such that for every $M\geq0$ there exist $T^*= T^*(M)>0$ and $V_M: \R^n \rightarrow \R_{\geq0} \cup \{\infty\}$ such that 
    \begin{subequations}
      \label{eq:cond_1_semiglobal_32}
      \begin{align}
        \label{eq:cond_1_semiglobal_a32}
        & \alpha_1(|x|)\leq V_M(x),\quad \forall x \in \R^n,\\
        \label{eq:cond_1_semiglobal_b32}
        & V_M(x) \le \alpha_2(|x|),\quad \forall |x|\le M,
      \end{align}
    \end{subequations}
    and
    \begin{align}
      \label{eq:cond_2_semiglobal_32}
      &V_M(\bar F(x,d,T))-V_M(x) \leq -T\alpha_3 (|x|) \quad  
    \end{align}
    for all  $|x|\leq M$, $|d|\leq D$ and $T\in(0,T^*)$.
  \end{enumerate}
\end{lemma}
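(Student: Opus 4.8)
The plan is to establish the two equivalences $\ref{it:RSSVSR}\Leftrightarrow\ref{item:eps-delt}$ and $\ref{it:RSSVSR}\Leftrightarrow\ref{it:boundedV}$, using Remark~\ref{rem:Ttriangle} throughout to take every $T^\blacktriangle(\cdot)$ nonincreasing, so that $M_1\le M_2$ gives $\Phi(T^\blacktriangle(M_2))\subseteq\Phi(T^\blacktriangle(M_1))$. The implication $\ref{it:RSSVSR}\Rightarrow\ref{item:eps-delt}$ is read directly off \eqref{eq:semiglobal_cond}: with $\beta\in\KL$ and $T^\blacktriangle(M)$ as in Definition~\ref{def:ss-vsr}, pick $\delta(\epsilon)>0$ with $\beta(\delta(\epsilon),0)\le\epsilon$ (possible since $\beta(\cdot,0)\in\K$, and manifestly $M$-independent, as \ref{enum:lemequiv1} requires), put $C(M,L):=\beta(M,0)$, and pick $\mathcal T(M,\epsilon)$ with $\beta(M,\mathcal T(M,\epsilon))\le\epsilon$ (possible since $\beta(M,\cdot)$ decreases to $0$); monotonicity of $\beta$ in each argument then yields \ref{enum:lemequiv1}, \ref{enum:lemequiv3}, \ref{enum:lemequiv2}, respectively.

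The reverse implication $\ref{item:eps-delt}\Rightarrow\ref{it:RSSVSR}$ is a semiglobal, varying-sampling-rate version of Sontag's lemma on $\KL$-estimates, and is where most of the work lies. First note that for $|x_0|\le M$ the bound \eqref{eq:semiglobal_cond} is needed only for $\{T_i\}\in\Phi(T^\blacktriangle(M))\subseteq\Phi(T^\blacktriangle(|x_0|))$, so it suffices to dominate $|x_k|$ over all trajectories with $\{T_i\}\in\Phi(T^\blacktriangle(|x_0|))$ by a single $\beta\in\KL$. A naive envelope need not be locally bounded in $|x_0|$, because enlarging the level both enlarges the admissible set of initial states and shrinks the admissible set of sampling sequences; I would circumvent this by replacing $T^\blacktriangle(M)$ with the smaller, still admissible, choice $\hat T^\blacktriangle(M):=T^\blacktriangle(\lceil M\rceil+1)$, for which every trajectory with $|x_0|=r$ and $\{T_i\}\in\Phi(\hat T^\blacktriangle(r))$ is a ``matched'' trajectory at the integer level $\lceil r\rceil+1$ and hence, by \ref{enum:lemequiv3}--\ref{enum:lemequiv2} applied at that level, satisfies $|x_k|\le R_{\lceil r\rceil+1}$ with $R_j$ finite and depending only on $j$. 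Then $h(s,t):=\sup\{|x_k|:k\in\N_0,\ |x_0|\le s,\ \{T_i\}\in\Phi(\hat T^\blacktriangle(|x_0|)),\ \{d_i\}\in\D,\ \sum_{i=0}^{k-1}T_i\ge t\}$ is nondecreasing in $s$, nonincreasing in $t$, finite and locally bounded in $s$ (by the running maximum of the $R_j$), tends to $0$ as $t\to\infty$ for each $s$ (by \ref{enum:lemequiv2}), and tends to $0$ as $s\to0^+$ uniformly in $t$ (by \ref{enum:lemequiv1}); a standard majorization then produces $\beta\in\KL$ with $h\le\beta$, and since $|x_k|\le h(|x_0|,\sum_{i<k}T_i)$ whenever $\{T_i\}\in\Phi(\hat T^\blacktriangle(|x_0|))$, \eqref{eq:semiglobal_cond} holds with this $\beta$ and $T^\blacktriangle:=\hat T^\blacktriangle$.

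For $\ref{it:RSSVSR}\Rightarrow\ref{it:boundedV}$ I would use a converse-Lyapunov construction. By the standard $\KL$-lemma write $\beta(s,t)\le\theta_1(\theta_2(s)e^{-t})$ with $\theta_1,\theta_2\in\Kinf$. Fix $M$, set $N:=\beta(M,0)\ (\ge M)$ and $T^*(M):=\min\{1,T^\blacktriangle(N)\}$, and define
\[
V_M(x):=\sup\Big\{e^{\sum_{i=0}^{k-1}T_i}\,\theta_1^{-1}(|x_k|):k\in\N_0,\ x_0=x,\ \{T_i\}\in\Phi(T^*(M)),\ \{d_i\}\in\D\Big\}\in\R_{\ge0}\cup\{\infty\}.
\]
Taking $k=0$ gives $\theta_1^{-1}(|x|)\le V_M(x)$, i.e.\ \eqref{eq:cond_1_semiglobal_a32} with $\alpha_1:=\theta_1^{-1}$; for $|x|\le N$ (in particular $|x|\le M$) the bound \eqref{eq:semiglobal_cond} gives $e^{\sum T_i}\theta_1^{-1}(|x_k|)\le\theta_2(|x|)$, whence $V_M(x)\le\theta_2(|x|)=:\alpha_2(|x|)$, which is \eqref{eq:cond_1_semiglobal_b32} and shows $V_M$ finite on $\{|x|\le N\}$. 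For \eqref{eq:cond_2_semiglobal_32}, fix $|x|\le M$, $|d|\le D$, $T\in(0,T^*(M))$ and put $y:=\bar F(x,d,T)$; prepending the first interval $T<T^*(M)$ to any trajectory from $y$ admissible in the supremum defining $V_M(y)$ produces one admissible in the supremum defining $V_M(x)$, so $e^{T}V_M(y)\le V_M(x)$, and since $|y|\le\beta(M,0)=N$ forces $V_M(y)<\infty$ one obtains $V_M(y)-V_M(x)\le(e^{-T}-1)V_M(x)\le-\tfrac T2\theta_1^{-1}(|x|)$, using $e^{-T}-1\le-T/2$ on $(0,1)$ and $V_M(x)\ge\theta_1^{-1}(|x|)$; thus $\alpha_3:=\tfrac12\theta_1^{-1}$ and $T^*=T^*(M)$ work.

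Finally $\ref{it:boundedV}\Rightarrow\ref{it:RSSVSR}$ is a comparison argument that, pleasantly, returns a single $\beta$. Since \eqref{eq:cond_1_semiglobal_32} forces $\alpha_1\le\alpha_2$, fix $M$, set $M':=\alpha_1^{-1}(\alpha_2(M))\ (\ge M)$ and $T^\blacktriangle(M):=T^*(M')$. For $|x_0|\le M$ one has $V_{M'}(x_0)\le\alpha_2(M)$; by induction, as long as $|x_k|\le M'$, \eqref{eq:cond_2_semiglobal_32} gives $V_{M'}(x_{k+1})\le V_{M'}(x_k)$, hence $V_{M'}(x_k)\le\alpha_2(M)$ and $|x_k|\le\alpha_1^{-1}(\alpha_2(M))=M'$, so this holds for all $k$ and, writing $v_k:=V_{M'}(x_k)$, $v_{k+1}-v_k\le-T_k\alpha_3(\alpha_2^{-1}(v_k))$. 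Summing over $k<K$ and using that $v_k$ is nonincreasing gives $v_K+\alpha_3(\alpha_2^{-1}(v_K))\sum_{i=0}^{K-1}T_i\le v_0$, hence $v_K\le\sigma(v_0,\sum_{i=0}^{K-1}T_i)$ where $\sigma(r,t)$ is the unique $v\in[0,r]$ solving $v+\alpha_3(\alpha_2^{-1}(v))t=r$; $\sigma\in\KL$ and depends only on $\alpha_2,\alpha_3$, so $|x_K|\le\alpha_1^{-1}\big(\sigma(\alpha_2(|x_0|),\sum_{i=0}^{K-1}T_i)\big)=:\beta(|x_0|,\sum_{i=0}^{K-1}T_i)$ with $\beta\in\KL$ independent of $M$, i.e.\ \eqref{eq:semiglobal_cond}. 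The step I expect to be most delicate is the $\KL$-recovery in $\ref{item:eps-delt}\Rightarrow\ref{it:RSSVSR}$: extracting a \emph{single} $\beta\in\KL$ valid for all $M$, which is exactly what forces the re-choice of $T^\blacktriangle$ above (to restore local boundedness of the envelope despite the non-monotone dependence of the reachability data on the level) and the careful isolation of the near-origin regime governed by the $M$-independent $\delta$.
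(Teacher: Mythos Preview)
Your proposal is correct and follows the same overall scheme as the paper (prove $1\Leftrightarrow 2$ and $1\Leftrightarrow 3$, with $1\Rightarrow 2$ read off the $\KL$ bound and $1\Rightarrow 3$ via a converse-Lyapunov supremum), but two of the four legs are carried out differently.

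For $\ref{item:eps-delt}\Rightarrow\ref{it:RSSVSR}$, the paper does not form your envelope $h(s,t)$ nor use the ceiling trick $\hat T^\blacktriangle(M)=T^\blacktriangle(\lceil M\rceil+1)$. Instead it takes the infimal admissible $\underbar{$\mathcal T$}(M,\epsilon)$ and $\underbar{$C$}(M,L)$, builds $\tilde\beta(r,t):=\inf\{\epsilon:\underbar{$\mathcal T$}(r,\epsilon)\le t\}$, and then invokes \cite[Lemma~15]{Albertini99} to dominate $\tilde\beta$ by a genuine $\beta\in\KL$. The near-origin regime is handled by bounding the supremal $\bar\delta(\epsilon)$ below by a $\K$-function $\alpha$ and then splicing $\alpha^{-1}$ with the boundedness data into a single $\bar\alpha\in\Kinf$. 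Your ceiling discretization is a clean alternative: it makes the trajectory data locally constant in the level, which immediately yields local boundedness of $h(\cdot,t)$ and decay as $t\to\infty$ (via a finite maximum of the $\mathcal T(j,\epsilon)$); it avoids the external lemma but at the cost of a coarser $T^\blacktriangle$. For $\ref{it:boundedV}\Rightarrow\ref{it:RSSVSR}$, the paper interpolates $V_{M'}(x_k)$ piecewise linearly to $y(t)$, obtains $\dot y\le-\alpha(y)$ a.e., and applies \cite[Lemma~4.4]{LinSontagWangSIAM96}; your telescoping inequality $v_K+\alpha_3(\alpha_2^{-1}(v_K))\sum_{i<K}T_i\le v_0$ and implicitly defined $\sigma\in\KL$ reach the same conclusion without the interpolation step and without the external comparison lemma. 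In $\ref{it:RSSVSR}\Rightarrow\ref{it:boundedV}$ the two constructions are essentially the same (the paper writes $\alpha_1(\beta(s,t))\le\alpha_2(s)e^{-3t}$ via \cite[Lemma~7]{Kellett2014} and uses the exponent pair $(2,3)$, you write $\beta(s,t)\le\theta_1(\theta_2(s)e^{-t})$ and use $(1,1)$; both give $e^T V_M(\bar F)\le V_M$ by prepending one step).
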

%======================================================================
% END LEMMA 31 (epsilon-delta CHARACTERIZATION SS-VSR-SI)
%======================================================================

{\color{black}
The $\epsilon$-$\delta$ characterization in item~\ref{item:eps-delt}.~of
Lemma~\ref{lemma:equiva} contains all the
ingredients 
of an 
$\epsilon$-$\delta$ characterization of uniform global asymptotic 
stability for a continuous-time system \cite{masser_am56} but in semiglobal
form and for a discrete-time model. These ingredients are: semiglobal 
uniform stability in \ref{enum:lemequiv1}), semiglobal uniform boundedness 
in \ref{enum:lemequiv3}), and semiglobal uniform attractivity in
\ref{enum:lemequiv2}). 
The Lyapunov conditions in item~\ref{it:boundedV}.~have several differences 
with respect to the Lyapunov-type conditions ensuring $(\beta,\R^n)$-stability \cite{NesicSCL99} or $(\beta,\R^n)$-stability under VSR \cite{RPIC2017}. %, they are as follows.
First, note that the Lyapunov-type
function $V_M$ may be not the same for each upper bound $M$ on the norm
of the state. Second, the functions $V_M$ may take infinite values
and it is not required that they satisfy any Lipschitz-type condition.
Third, the upper bound given by $\alpha_2\in\Ki$ should only hold for 
states whose norm is upper bounded by $M$.
}

Theorem~\ref{lem:ISS-VSR=LF} gives necessary and sufficient conditions 
for a discrete-time model of the form \eqref{eq:system1} to be S-ISS-VSR.
These conditions consist of specific boundedness and continuity requirements
and a Lyapunov-type condition.
The characterizations given in Lemma~\ref{lemma:equiva}
are used in the proof of  
Theorem~\ref{lem:ISS-VSR=LF}.
\begin{theorem}
  \label{lem:ISS-VSR=LF}
  The following statements are equivalent:
   \begin{enumerate}
  \item \label{it:ISS-VSR=LF1}
    The system  \eqref{eq:system1}  is S-ISS-VSR.
  \item 
  \begin{enumerate}[i)]
  \item  There exists $\mathring T>0$ so that $\bar F(0,0,T) = 0$ for all $T\in(0,\mathring T)$. \label{item:zero}
\item There exists $\hat T>0$ such that for every $\epsilon>0$ there exists $\delta=\delta(\epsilon)>0$ 
such that $|\bar F(x,e,T)|<\epsilon$ whenever $|x|\leq \delta$, $|e|\leq \delta$ and $T\in(0,\hat T)$.\label{item:continuity}
\item For every $M\ge 0$ and $E\ge 0$, there exist $C=C(M,E)>0$ and $\check{T} = \check{T}(M,E) > 0$, with $C(\cdot,\cdot)$ 
nondecreasing in each variable and $\check{T}(\cdot,\cdot)$ nonincreasing in each variable, such that $|\bar F(x,e,T)|\leq C$ 
for all $|x|\leq M$, $|e|\leq E$ and $T \in (0,\check{T})$.   \label{item:bound}
\item There exist $\alpha_1, \alpha_2, \alpha_3 \in \K_{\infty}$ 
and $\rho \in \K$ such that for every $M\ge0$ and $E\ge0$ there exist 
$\tilde T = \tilde T(M,E) > 0$ and $V=V_{M,E}: \R^n \rightarrow \R_{\geq0} \cup \{\infty\}$ such that 
\label{item:lyapunov}
    \begin{subequations}
      \label{eq:cond_1_semiglobal}
      \begin{align}
        \label{eq:cond_1_semiglobal_a}
        & \alpha_1(|x|)\leq V(x),\quad \forall x \in \R^n,\\
        \label{eq:cond_1_semiglobal_b}
        & V(x) \le \alpha_2(|x|),\quad \forall |x|\le M,
      \end{align}
    \end{subequations}
    and
  \begin{align}
    \label{eq:lypunov_c2_SISSVSR} 
    &V(\bar F(x,e,T)) -V(x) \leq -T\alpha_3 (|x|)
  \end{align} 
  for all $\rho(|e|)\leq |x| \leq M$, $|e|\le E$ and $T\in (0,\tilde T)$.
\end{enumerate}
\end{enumerate}
\end{theorem}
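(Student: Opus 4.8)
The plan is to prove the two implications separately, using Lemma~\ref{lemma:equiva} to supply the Lyapunov function in the first one. Assume first that \eqref{eq:system1} is S-ISS-VSR, with $\beta\in\KL$ and $\gamma\in\K_\infty$ as in~\eqref{eq:S-ISS-VSR}; then items~\ref{item:zero}--\ref{item:bound} come almost for free. Taking $M=E=0$ in~\eqref{eq:S-ISS-VSR} forces $x_0=0$ and $e_i\equiv 0$, hence $x_k\equiv 0$, so $\bar F(0,0,T)=0$ for $T<T^\star(0,0)=:\mathring T$, which is~\ref{item:zero}. Evaluating~\eqref{eq:S-ISS-VSR} at $k=1$ gives $|\bar F(x_0,e_0,T_0)|\le\beta(|x_0|,0)+\gamma(|e_0|)$: with $|x_0|\le M$ and $|e_0|\le E$ this yields~\ref{item:bound} with $C(M,E):=\beta(M,0)+\gamma(E)+1$ and $\check T:=T^\star(M,E)$, nonincreasing by Remark~\ref{rem:Ttrnoninc}; with $M=E=1$ and $\delta(\epsilon)\le 1$ chosen so that $\beta(\delta,0)+\gamma(\delta)<\epsilon$ it yields~\ref{item:continuity} with $\hat T:=T^\star(1,1)$. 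Both use only that $\beta(\cdot,0)$ and $\gamma$ vanish continuously at $0$.

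The substance of the first implication is~\ref{item:lyapunov}. I would fix $\rho\in\K_\infty$ with $\rho\ge 2\gamma$ and introduce the auxiliary ``domination-clipped'' system $x_{k+1}=G(x_k,d_k,T_k)$, where $G(x,d,T):=\bar F(x,\sigma(x,d),T)$ and $\sigma(x,d):=d$ if $\rho(|d|)\le|x|$, $\sigma(x,d):=0$ otherwise. The crux is that $G$ is RSS-VSR with a $\KL$ bound depending only on $\beta$: given a bound $M''$ on $|x_0|$, a first induction (using $\rho,\gamma\in\K_\infty$ and $\rho\ge 2\gamma$) shows the effective disturbance $\hat e_i:=\sigma(x_i,d_i)$ along any trajectory of $G$ obeys $\|\{\hat e_i\}\|\le E^*(M'')<\infty$ for all sufficiently small sampling periods; plugging this into~\eqref{eq:S-ISS-VSR} and using $\rho(|\hat e_i|)\le|x_i|$ together with $\rho\ge 2\gamma$ gives $|x_k|\le\beta(|x_0|,\sum_{i=0}^{k-1}T_i)+\tfrac12\sup_{0\le i\le k-1}|x_i|$, and a standard $\KL$-``peeling''/causality argument absorbs the $\tfrac12\sup$ term into a $\KL$ bound built from $\beta$ alone. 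By the equivalence of items~\ref{it:RSSVSR} and~\ref{it:boundedV} of Lemma~\ref{lemma:equiva}, $G$ then admits $\alpha_1,\alpha_2,\alpha_3\in\K_\infty$ and, for each $M$, a $T^*(M)>0$ and $V_M$ satisfying~\eqref{eq:cond_1_semiglobal_32}--\eqref{eq:cond_2_semiglobal_32}. Setting $V_{M,E}:=V_M$ and $\tilde T(M,E):=T^*(M)$ gives~\ref{item:lyapunov}: on $\{\rho(|e|)\le|x|\}$ one has $\sigma(x,e)=e$, so $G(x,e,T)=\bar F(x,e,T)$ there and~\eqref{eq:cond_2_semiglobal_32} becomes~\eqref{eq:lypunov_c2_SISSVSR}; the absence of $E$ in the conclusion is harmless because $\rho(|e|)\le|x|\le M$ already forces $|e|\le\rho^{-1}(M)$.

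For the converse, assume~\ref{item:zero}--\ref{item:lyapunov}; I would run a ``decrease outside a ball, invariance inside'' argument adapted to VSR. From~\ref{item:continuity}--\ref{item:bound} I would build $g\in\K_\infty$ with $|\bar F(x,e,T)|\le g(|e|)$ whenever $|x|<\rho(|e|)$ and $T$ is small, and set $\gamma:=\alpha_1^{-1}\circ\alpha_2\circ\max\{g,\rho\}\in\K_\infty$. Given $M,E$, take $M'=M'(M,E)$ sufficiently large, $V:=V_{M',E}$ from~\ref{item:lyapunov}, and $T^\star(M,E)$ the minimum of $\mathring T$, $\hat T$, $\tilde T(M',E)$ and the pertinent $\check T$-thresholds. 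While $V(x_k)$ exceeds $\alpha_2(\max\{g(\|\{e_i\}\|),\rho(\|\{e_i\}\|)\})$ one checks $\rho(|e_k|)\le|x_k|\le M'$, so~\eqref{eq:lypunov_c2_SISSVSR} applies and $V(x_{k+1})\le V(x_k)-T_k(\alpha_3\circ\alpha_2^{-1})(V(x_k))$; below that level the sublevel set is forward invariant by construction of $g$. A discrete comparison lemma, with $\sum_{i=0}^{k-1}T_i$ as the time variable, turns this into $V(x_k)\le\max\{\mu(V(x_0),\sum_{i=0}^{k-1}T_i),\ \alpha_2(\max\{g(\|\{e_i\}\|),\rho(\|\{e_i\}\|)\})\}$ with $\mu\in\KL$ depending only on $\alpha_2,\alpha_3$, and the sandwich $\alpha_1(|x_k|)\le V(x_k)$, $V(x_0)\le\alpha_2(|x_0|)$ yields~\eqref{eq:S-ISS-VSR}; a short induction confirms the trajectory stays in $\{|x|\le M'\}$, so the choice of $M'$ is consistent.

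I expect the main obstacle to be the $\KL$-peeling step together with the bookkeeping that keeps $\alpha_1,\alpha_2,\alpha_3,\rho,\beta,\gamma$ independent of $M$ and $E$ while letting only the sampling thresholds depend on them --- this is precisely where the semiglobal character of the properties bites and forces one to track carefully which quantities may blow up with $M$ or $E$. The two structural facts that make it go through are that the region $\rho(|e|)\le|x|\le M$ in~\ref{item:lyapunov} already bounds $|e|$ (so $E$ drops out of the Lyapunov construction) and that the peeling bound depends only on $\beta$ (so the auxiliary system $G$, and hence $\alpha_1,\alpha_2,\alpha_3$, are independent of $M$).
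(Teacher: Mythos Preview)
Your overall strategy matches the paper's closely: in both directions you reduce to Lemma~\ref{lemma:equiva}, and your converse $(2\Rightarrow 1)$ is essentially the paper's argument (your $g$ is their $\zeta$, your $\max\{g,\rho\}$ their $\eta$, and the invariance/decrease dichotomy followed by a comparison principle is the same). The substantive difference is in the auxiliary system used for $(1\Rightarrow 2)$.

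The paper does not clip; it \emph{scales}. It defines $\sigma\in\K_\infty$ by $\sigma(s)=\gamma^{-1}\bigl(\tfrac12\alpha^{-1}(s)\bigr)$ with $\alpha(s)=2\beta_0(s,0)$, $\gamma=2\gamma_0$, and considers $x_{k+1}=\bar F\bigl(x_k,\sigma(|x_k|)d_k,T_k\bigr)$ with $|d_k|\le 1$. This puts the auxiliary system exactly in the form~\eqref{eq:sysd} with the \emph{fixed} disturbance bound $D=1$, so Lemma~\ref{lemma:equiva} applies to a single system and the resulting $\alpha_1,\alpha_2,\alpha_3$ are automatically independent of $M,E$. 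Setting $\rho:=\sigma^{-1}$, any $e$ with $\rho(|e|)\le|x|$ can be written as $e=\sigma(|x|)d$ for some $|d|\le 1$, which transfers~\eqref{eq:cond_2_semiglobal_32} directly to~\eqref{eq:lypunov_c2_SISSVSR}. The paper also verifies RSS-VSR of the scaled system via the $\epsilon$--$\delta$ characterization (item~\ref{item:eps-delt} of Lemma~\ref{lemma:equiva}) using an explicit halving scheme, which is the same mechanism as your ``peeling''.

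Your clipped system $G$ is a legitimate alternative and the peeling does yield a $\KL$ bound depending only on $\beta$. The gap is that RSS-VSR and Lemma~\ref{lemma:equiva} are formulated for a \emph{fixed} disturbance bound $D$: the decrease in item~\ref{it:boundedV} is only asserted for $|d|\le D$. With clipping, transferring the decrease from $G$ to $\bar F$ on $\{\rho(|e|)\le|x|\le M\}$ requires every such $e$ to arise as $\sigma(x,d)$ for some $|d|\le D$, which forces $D\ge\rho^{-1}(M)$ and hence a different $D$ for each $M$. This is repairable --- either take $D=D_M:=\rho^{-1}(M)$ and argue, from the \emph{proof} of Lemma~\ref{lemma:equiva}, that $\alpha_1,\alpha_2,\alpha_3$ depend only on the $\KL$ bound (which you have shown is uniform in $D$), or switch to the paper's scaling so that $D=1$ throughout --- but as written your proposal does not explain how a single invocation of Lemma~\ref{lemma:equiva} yields $\alpha_i$ independent of $M$ while simultaneously covering arbitrarily large $|e|$.
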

%======================================================================
% END THEOREM S-ISS-VSR
%======================================================================

%======================================================================
% BEGIN PROOF OF THEOREM S-ISS-VSR
%======================================================================
\begin{proof}[Proof of Theorem~\ref{lem:ISS-VSR=LF}] 

(1 $\Rightarrow$ 2)
Let $\beta_0 \in \mathcal{KL}$, $\gamma_0 \in \K_\infty$ and $T^\star(\cdot,\cdot)$ 
characterize the S-ISS-VSR property.

1) $\Rightarrow$ \ref{item:zero}) 
Define $\mathring T:=T^\star(0,0)$. From (\ref{eq:system1}) and \eqref{eq:S-ISS-VSR}, we have 
\begin{align}
  |\bar F(0,0,T)| &\leq \beta_0(0,T)+\gamma_0(0) = 0 \notag 
\end{align}
for all $T\in(0, \mathring T)$. 

1) $\Rightarrow$ \ref{item:continuity})
Define $\hat\beta,\eta \in \K_\infty$ via $\hat \beta(s):= \beta_0(s,0)+s$ and $\eta(s) := \min\{ \hat \beta^{-1}(s/2), \gamma_0^{-1}(s/2)\}$. Define $\hat T := T^\star(\eta(1),\eta(1))$.
Let $\epsilon>0$. Choose $\delta = \eta(\min\{\epsilon,1\}) > 0$. Note that $T^\star(\delta,\delta) \ge \hat T$ because $\delta \le \eta(1)$ and $T^\star$ is nonincreasing in each variable. Then, using (\ref{eq:system1}) and \eqref{eq:S-ISS-VSR}, it follows that for all $|x|\leq \delta$, $|e|\le \delta$ and $T\in(0,\hat T)$ we have
\begin{color}{black}
\begin{align} 
|\bar F(x,e,T)|&\leq \beta_0(\delta,T)+\gamma_0(\delta) %\notag \\
                       <\hat \beta(\delta)+\gamma_0(\delta) \leq \epsilon.
\end{align}\end{color}
1) $\Rightarrow$ \ref{item:bound})
Let $\check T = T^\star$ and $C(M,E) = \beta_0(M,0) + \gamma_0(E)$. Then, $\check T$ is nonincreasing in each variable and $C$ is increasing (and hence nondecreasing) in each variable. Let $M,E\ge 0$. Then, from (\ref{eq:system1}) and \eqref{eq:S-ISS-VSR}, for all $|x|\leq M$, $|e|\leq E$ and $T\in (0,\check{T}(M,E))$ we have that $|\bar F(x,e,T)|\leq \beta_0\left(M,0\right)+\gamma_0\left(E \right )=C(M,E)$. 

1) $\Rightarrow$ \ref{item:lyapunov}) 
Define $\beta(s,t):=2\beta_0(s,t)$ and $\gamma(s):=2\gamma_0(s)$. 
Define $\alpha \in \K_\infty$ via  $\alpha(s) := \beta(s,0)$ and $\sigma\in \K_\infty$ via $\sigma(s):=\gamma^{-1}(\frac{1}{2}\alpha^{-1}(s))$. Consider the following system: 
\begin{align}
  \label{eq:simp}
  x_{k+1}=\bar F(x_k,\sigma(|x_k|)d_k,T_k), \quad \|\{d_i\}\|\leq 1, 
\end{align}
with $d_k \in \R^q$ for all $k\in\N_0$.

\begin{claim}
  \label{clm:1}
  For every $M\ge 0$ there exists $\bar T=\bar T(M)>0$, 
  with  $\bar T(\cdot)$  nonincreasing, 
  such that the
solutions of \eqref{eq:simp} satisfy 
\begin{align}
  \label{eq:bndclm1}
  |x_k| &\leq \max \left\{ \beta\left(|x_0|, \sum_{i=0}^{k} T_i  \right) , \frac{1}{2}|x_0| \right \} \le \alpha(|x_0|) 
\end{align}
for all $k\in \N_0$, whenever $|x_0|\leq M$ and $\{T_i\}\in \Phi(\bar T)$.
\end{claim}
\noindent\emph{Proof of Claim~\ref{clm:1}:}
Given $M\ge 0$, take $\bar T(M) = T^\star(M,\gamma^{-1}(M/2)) > 0$. 
Note that $\bar T$ is nonincreasing because $\gamma^{-1}\in \K_\infty$ and $T^\star$ is nonincreasing in each variable.
We establish the result by induction. For $k=0$, we have $|x_0| \le \beta_0(|x_0|,0) = \alpha(|x_0|)$. Suppose that $|x_i| \le \alpha(|x_0|)$ for all $0 \le i \le k$. % if $|x_\ell|\leq  \alpha(|x_\ell|)$ for some $\ell \in \N_0$ then  $|x_k|\le \alpha(|x_\ell|)$  for all $k\geq \ell$. Suppose that $|x_i|\leq \alpha(|x_\ell|)$ for all $\ell \leq i \leq k$ for some $k\geq \ell$. Select $M =|x_\ell| \ge 0$ and $E=\gamma^{-1}(|x_\ell|/2) \ge 0$.
Then, $|\sigma(|x_i|)d_i| \le \sigma(|x_i|) \le \sigma(\alpha(|x_0|)) = \gamma^{-1}(|x_0|/2) \le \gamma^{-1}(M/2)$ for all $0 \le i \le k$. 
Then, for all $\{T_i\}\in \Phi(\bar T(M))$, we have
\begin{align}
|x_{k+1}|  &\le \beta_0\left(|x_0|,\sum_{i=0}^{k} T_i \right) + \gamma_0\left( \sup_{0\le i\le k} \big| \sigma(|x_i|)d_i \big| \right)\\
           &\leq \max \left\{ \beta\left(|x_0|, \sum_{i=0}^{k} T_i  \right) , \gamma\left(\sup_{0\le i \le k}|\sigma(|x_i|)d_i| \right)\right \} \notag \\
           &\leq \max \left\{ \beta\left(|x_0|, \sum_{i=0}^{k} T_i  \right) , \frac{1}{2}|x_0| \right \} % \notag \\
               \le \alpha(|x_0|), \label{eq:induc}
\end{align}
where in the last inequality we have used the fact
that $2s\leq 2\beta_0(s,0)=\alpha(s)$.
By induction, then $|x_k|\leq \alpha(|x_0|)$ for all $k\in \N_0$.\mer

We next show that \eqref{eq:simp} is RSS-VSR by means of Lemma~\ref{lemma:equiva}. 
Given $M\ge 0$, take $T^\blacktriangle(M) = \bar T(M)$.  

Condition \ref{enum:lemequiv1}) of Lemma~\ref{lemma:equiva}: 
Let $\epsilon>0$ and take $\delta = \alpha^{-1}(\epsilon)$. Then, if $|x_0| \le \min\{\delta,M\}$ and $\{T_i\} \in \Phi(T^\blacktriangle(M))$, by Claim~\ref{clm:1} it follows that $|x_k|\leq \alpha(|x_0|) \le \epsilon$ for all $k\in \N_0$.

Condition \ref{enum:lemequiv3}) of Lemma~\ref{lemma:equiva}: 
Define $C(M,L) = \alpha(M)$. Then, if $|x_0| \le M$ and $\{T_i\} \in \Phi(T^\blacktriangle(M))$, by Claim~\ref{clm:1} it follows that $|x_k|\le \alpha(|x_0|) \le C$ for all $k\in \N_0$.

Condition \ref{enum:lemequiv2})  of Lemma~\ref{lemma:equiva}:
For every $j\in\N_0$, define $M_j:=\frac{M}{2^j}$, $t_{j}=t_j(M)>0$ via 
\begin{align*}
  \beta\left(M_j,t_j\right) = \frac{1}{2}M_j,
\end{align*}
and $\tau_j = \tau_j(M)$ via
\begin{align*}
  \tau_j(M) &:= jT^\blacktriangle(M) + \sum_{i=0}^{j} t_i(M),  
\end{align*}
\begin{claim}
  \label{clm:2h}
  Consider $|x_0|\leq M$ and $\{T_i\} \in \Phi(T^\blacktriangle(M))$.
  For all $j,k\in\N_0$ for which $\sum_{i=0}^{k-1} T_i \ge \tau_j$, it happens that
  \begin{align*}
    |x_k| &\le M_{j+1}.
  \end{align*}
\end{claim}
\noindent\emph{Proof of Claim~\ref{clm:2h}:} By induction on $j$. 
If $\sum_{i=0}^{k-1} T_i \ge \tau_0 = t_0$, 
then from (\ref{eq:bndclm1}) in Claim~\ref{clm:1} and since $M_0=M$, we have that
\begin{align*}
  |x_k| &\le \max\{\beta(M_0,t_0),M_0/2\} = M_1,
\end{align*}
Hence, our induction hypothesis holds for $j=0$.
Next, suppose that for some $j\in\N_0$ and for all $k\in\N_0$ for which $\sum_{i=0}^{k-1} T_i \ge \tau_j$ it happens that $|x_k| \le M_{j+1}$. 
Let $k^* = \min\{k\in\N_0 : \sum_{i=0}^{k-1} T_i \ge \tau_j \}$. Then, $\sum_{i=0}^{k^* - 1} T_i < \tau_j + T^\blacktriangle$ and $|x_{k^*}| \le M_{j+1} \le M_0 = M$. If $\sum_{i=0}^{k-1} T_i \ge \tau_{j+1} = \tau_j + T^\blacktriangle + t_{j+1}$, then necessarily $\sum_{i=k^*}^{k-1} T_i \ge t_{j+1}$. 
Note that  $\Phi(T^\blacktriangle(M_0))\subset \Phi(T^\blacktriangle(M_j))$ for all $j\in \N_0$.
By Claim~\ref{clm:1} and time invariance, it follows that, for all 
$\{T_i\} \in \Phi(T^\blacktriangle(M_0))$ and all
$k$ for which $\sum_{i=0}^{k-1} T_i \ge \tau_{j+1}$, then
\begin{align*}
  |x_k| &\le \max\left\{ \beta\left( \left|x_{k^*} \right|, \sum_{i=k^*}^{k-1} T_i \right) , \frac{1}{2} \left|x_{k^*} \right| \right\}\\
  &\le \max\left\{ \beta\left( M_{j+1}, t_{j+1} \right) , \frac{1}{2} M_{j+1} \right\} = M_{j+2}.
\end{align*}
Therefore, our induction hypothesis holds for $j+1$.\mer

Given $\epsilon>0$ define $p = p(M,\epsilon) \in\N$ and $\T(M,\epsilon)$ as
\begin{align*}
  p(M,\epsilon) &:= \min \{ j \in \N_0 : M_j < 2\epsilon \},\\
  \T(M,\epsilon) &:= \tau_p(M,\epsilon).
\end{align*}
By Claim~\ref{clm:2h}, it follows that 
for all $|x_0|\leq M$,
all $\{T_i\}\in \Phi(T^\blacktriangle(M))$
and all $k\in\N_0$ for which $\sum_{i=0}^{k-1} \ge \T(M,\epsilon)$, then $|x_k| \le M_{p+1} < \epsilon$.
Therefore, by Lemma~\ref{lemma:equiva}, the system \eqref{eq:simp} is RSS-VSR
and there exist $\alpha_1, \alpha_2, \alpha_3
\in \K_{\infty}$ 
such that for every $M\geq 0$ there exist
$T^*=T^*(M)>0$ and $V_M: \R^n \rightarrow \R_{\geq0} \cup \{\infty\}$ such that 
\eqref{eq:cond_1_semiglobal_32} holds.
Consider $E\ge0$ given and
define  $\tilde T(M,E):=T^*(M)$,
then \eqref{eq:cond_1_semiglobal} holds. 
Also 
\begin{align}
\label{eq:19}
V_M(&\bar F(x,\sigma(|x|)d,T))-V_M(x) \leq -T\alpha_3 (|x|), 
\end{align}
holds for all $|x|\leq M$, all $|d|\leq 1$ and all $T\in(0, \tilde T)$.
Select $\rho(s):=\sigma^{-1}(s)$.
Then, for all $|e|\leq E$ such that $\rho(|e|) \leq |x|$
we have $|e|\leq \sigma(|x|)$.
Therefore all $e$ such that $\rho(|e|) \leq |x|$
can always be written as $e=\sigma(|x|)d$ for some $d\in \R^q$
with $|d|\leq1$. Then, from \eqref{eq:19}, we have that
\eqref{eq:lypunov_c2_SISSVSR} holds.

%=====================================================================
%  2 => 1
%=====================================================================
% 2)
(2 $\Rightarrow$ 1) 
We aim to prove that there exist 
$\beta \in \KL$ and $\gamma \in \K_\infty$ such that for all $M_0\ge0$, $E_0\ge0$ 
there exists $T^\star(M_0,E_0)>0$ such that 
the solutions of \eqref{eq:system1} satisfy
\begin{align}
|x_k|\leq \beta\left(|x_0|,\sum_{i=0}^{k-1}T_i\right)+\gamma\left(\sup_{0\le i \le k-1}|e_i| \right )
\end{align}
for all $k\in \N_0$, all $\{T_i\}\in \Phi(T^\star)$,
all $|x_0|\leq M_0$, and all $\|\{e_i\}\|\leq E_0$.
%Con
Consider $\rho\in\K$ from \ref{item:lyapunov}). Define, $\forall s \geq0$, 
\begin{align}
  \X_1(s)&:= \{x\in\R^n: |x|\leq \rho(s) \} \label{eq:X1_SISSVSR} \\
  \E(s)&:= \{e\in\R^q: |e|\leq s \}  \label{eq:E_SISSVSR}   \\
%  \b
 \bar T(s) &:= \min\Big\{\mathring T, \hat T, \check{T}(\rho(s),s)  \Big\} \label{eq:barT_SISSVSR}\\
  \S(s) &:= \X_1(s)\times\E(s)\times(0,\bar T(s))  \label{eq:setS_SISSVSR} \\
  \sigma(s) &:=\sup_{(x,e,T) \in \S(s)} |\bar F(x,e,T)|. \label{eq:1111_SISSVSR} 
\end{align}

\begin{claim}
\label{clm:6}
There exists $\zeta \in \K_\infty$ such that $\zeta \geq \sigma$.
\end{claim}
\indent\emph{Proof of Claim~\ref{clm:6}:}
From \eqref{eq:X1_SISSVSR}--\eqref{eq:E_SISSVSR}, we have $\X_1(0)=\{0\}$ and $\E(0)=\{0\}$. 
From assumptions~\ref{item:zero})--\ref{item:bound}), 
then $\bar T(0) > 0$ and $\sigma(0)=0$. 
We next prove that $\sigma$ is right-continuous at zero. 
Let $\epsilon > 0$ and take $\delta=\delta(\epsilon)$ according to \ref{item:continuity}). 
Define $\hat \delta:=\min\left\{\delta,\rho^{-1}(\delta)\right\}$
(if $\delta \notin \text{dom} \ms \rho^{-1}$, just take $\hat\delta=\delta$).
Then for all $x\in \X_1(\hat \delta)$ and $e\in \E(\hat \delta)$ the inequalities $|x|\leq \delta$ and $|e|\leq \delta$ hold. 
Consequently, by \ref{item:continuity}), we have $\sigma(s)\leq \epsilon$  for all $0\leq s\leq \hat \delta$.
This shows that $\lim_{s\to 0^+} \sigma(s) = \sigma(0) = 0$.

From \ref{item:bound}), 
it follows that $|\bar F(x,e,T)| \le C(M,E)$ for all $|x|\le M$, $|e|\le E$ and
$T\in (0,\check{T}(M,E))$. 
From (\ref{eq:X1_SISSVSR})--(\ref{eq:1111_SISSVSR}) and the fact that
$C(\cdot,\cdot)$ is nondecreasing in each variable, it follows that
$\sigma(s) \le C(\rho(s),s)$ for all $s\geq 0$.
Then, we have $\sigma : \R_{\geq0} \rightarrow \R_{\geq0}$, $\sigma(0)=0$, $\sigma$
is right-continuous at zero and
bounded by a nondecreasing function.
By \cite[Lemma~2.5]{clarke1998asymptotic}, 
there exists a function $\zeta \in \K_\infty$ 
such that $\zeta \geq \sigma$. \mer

Define $\eta \in \K_\infty$ via
\begin{align}
  \label{eq:defEta}
  &\eta(s) := \max \{ \zeta (s),\rho(s) \}\quad \forall s\ge 0. 
\end{align}

Consider $M_0\ge0$ and $E_0\ge0$ given and $\alpha_1,\alpha_2,\alpha_3 \in \Kinf$ from \ref{item:lyapunov}). 
Select $E:=E_0$
and 
\begin{align}
\label{eq:defM}
M:= \alpha_1^{-1} \comp \alpha_2\big( \max\{M_0, \eta(E_0)\} \big).
\end{align}
Let \ref{item:lyapunov}) generate $\tilde T=\tilde T(M,E)>0$ and 
$V_{M,E}:\R^n \rightarrow \R_{\geq0} \cup \{\infty\}$ such 
that 
\eqref{eq:cond_1_semiglobal}
and (\ref{eq:lypunov_c2_SISSVSR}) hold. 
Note that $M\geq  \max \{ M_0, \eta(E_0)\}$
and that $\tilde T(M,E)\leq \tilde T(M_0,E_0)$ because $\tilde T$ is nonincreasing in each variable.
Define $T^\star=\min\{\tilde T, \bar T(E)\}$ and
\begin{align}
  \X_2(s)&:= \{x: V_{M,E}(x)\leq \alpha_2(\eta(s))\}. \label{eq:X2}
\end{align}
Consider that $0\leq s\leq \eta^{-1}(M)$.
Let $x\in\X_1(s)$, by \eqref{eq:X1_SISSVSR} and \eqref{eq:defEta}, we have $|x| \le \rho(s) \le \eta(s) \leq M$
for all $0\leq s\leq \eta^{-1}(M)$. Then, 
by \eqref{eq:cond_1_semiglobal_b},
$V_{M,E}(x) \leq \alpha_2(|x|) \leq \alpha_2(\eta(s))$ for all $0\leq s\leq \eta^{-1}(M)$.
Therefore, $\X_1(s)\subseteq \X_2(s)$ 
for all $0 \le  s \leq \eta^{-1}(M)$.
Let $x_k:=x(k,x_0,\{e_i\},\{T_i\})$ denote the solution to \eqref{eq:system1} corresponding to $|x_0|\leq M_0$, $\|\{e_i\}\| \le E_0$ and $\{T_i\}\in \Phi(T^\star)$.
From (\ref{eq:cond_1_semiglobal_b}) if $x_k$ satisfies
$|x_k|\leq M$ we have $\alpha_2^{-1}(V_{M,E}(x_{k}))\leq |x_k|$; 
using this in (\ref{eq:lypunov_c2_SISSVSR}) then
\begin{align}
  \label{eq:diffVkp2} 
  V_{M,E}(x_{k+1}) &-V_{M,E}(x_{k}) \leq -T_k\alpha_3 (|x_k|) \le -T_k \alpha(V_{M,E}(x_{k})) \notag \\
             & \quad \text{if} \ms\ms \rho(|e_k|)\leq |x_k|\leq M
\end{align}
where $\alpha \dfn \alpha_3 \comp \alpha_2^{-1}$. 

\begin{claim}
\label{clm:3}
If $|x_0| \le M_0$ then $|x_k| \le M$ for all $k\in\N_0$.
\end{claim}
\indent\emph{Proof of Claim~\ref{clm:3}:}
By induction we will prove that $V_{M,E}(x_k) \le \alpha_2( \max\{ M_0, \eta(E_0) \})$ for all $k\in\N_0$. 
For $k=0$, from \eqref{eq:cond_1_semiglobal},
we have $|x_0| \le M_0 \leq M$ implies that $V_{M,E}(x_0) \le \alpha_2(M_0) \le \alpha_2( \max\{ M_0, \eta(E_0) \})$.
% Note 
Suppose that for some $k\in \N_0$ it happens that
$V(x_k) \le \alpha_2( \max\{ M_0,\eta(E_0) \})$. 
Then, by \eqref{eq:cond_1_semiglobal_a} and \eqref{eq:defM}, $|x_k| \le M$. 
If $x_k \notin \X_1(|e_k|)$, then $|x_k| >  \rho(|e_k|)$ and,
from (\ref{eq:diffVkp2}), then $V(x_{k+1}) \le V(x_k)$. If $x_k \in \X_1(|e_k|)$, from \eqref{eq:1111_SISSVSR},
the definition
of $\eta$ and \eqref{eq:defM} we have $|x_{k+1}| \leq \eta(|e_k|)\leq \eta(E_0) \leq \alpha^{-1} \comp \alpha_2 \comp \eta(E_0) \leq M$.
% From 
Using (\ref{eq:cond_1_semiglobal_b}), then $V(x_{k+1}) \le \alpha_2\comp\eta(|e_k|) \le \alpha_2\comp\eta(E_0)$,
and hence the induction assumption holds for $k+1$.
Since $V(x_k) \le \alpha_2( \max\{ M_0, \eta(E_0) \})$ 
implies that $|x_k| \le M$, we have thus shown that $|x_k| \le M$ for all $k\in\N_0$.\mer

\begin{claim}
\label{clm:4}
Consider $\|\{e_i\}\| \leq E_0$.
If $x_{\ell}\in \X_2(\|\{e_i\}\|)$ for some $\ell\in \N_0$ then $x_k$ remains in $\X_2(\|\{e_i\}\|)$ for all $k\geq \ell$.
\end{claim}
\indent\emph{Proof of Claim~\ref{clm:4}:}
% We 
By definition of $M$ in \eqref{eq:defM} we have
$E_0\leq \eta^{-1} \comp \alpha_2^{-1} \comp \alpha_1 (M)$.
Let $x_{\ell} \in \X_2(\|\{e_i\}\|)$,
then $V_{M,E}(x_\ell) \leq \alpha_2 \comp \eta(\|\{e_i\}\|)$.
By \eqref{eq:cond_1_semiglobal_a}, 
$|x_\ell| \leq \alpha_1^{-1} \comp \alpha_2 \comp \eta(\|\{e_i\}\|) \leq  \alpha_1^{-1} \comp \alpha_2 \comp \eta(E_0) \leq M $.
% Then
If $x_{\ell} \notin \X_1(\|\{e_i\}\|)$, then $|x_\ell| >  \rho(\|\{e_i\}\|)$. Consequently, if $x_{\ell} \in \X_2(\|\{e_i\}\|) \setminus \X_1(\|\{e_i\}\|)$, from \eqref{eq:diffVkp2} it follows that
\begin{align}
V_{M,E}(x_{\ell+1}) &\leq V_{M,E}(x_{\ell})-T_{\ell} \alpha(V_{M,E}(x_{\ell}))  \notag 
 \leq V_{M,E}(x_{\ell})  
\end{align}
and hence $x_{\ell+1} \in \X_2(\|\{e_i\}\|)$. 
Next, consider that  $x_{\ell} \in \X_1(\|\{e_i\}\|)$.
From \eqref{eq:1111_SISSVSR}, Claim~\ref{clm:6}, \eqref{eq:defEta}
and the definition of $M$ 
we have $|x_{\ell+1}| \leq \eta(\|\{e_i\}\|) \leq \eta(E_0) \leq M$.
Using (\ref{eq:cond_1_semiglobal_b}) and recalling \eqref{eq:X2}, then $x_{\ell+1} \in \X_2(\|\{e_i\}\|)$.
By induction, we have thus shown that if $x_\ell \in \X_2(\|\{e_i\}\|)$ for some $\ell\in\N_0$, then $x_k \in \X_2(\|\{e_i\}\|)$ for all $k\ge \ell$.\mer

% For

Let $|x_0|\leq M_0$ and $t_k=\sum_{i=0}^{k-1} T_i$ for every $k\in \N_0$. 
Consider the function
\begin{multline}
  \label{eq:yTdefn21} 
  y(t):=V_{M,E}(x_{k})+ \frac{t-t_k}{T_{k}}   \left[ V_{M,E}(x_{k+1})-V_{M,E}(x_{k}  ) \right] \\
                  \text{if } t \in \left[t_k,t_{k+1}\right),
\end{multline}
which depends on the initial condition $x_0$, on the sampling period sequence
$\{T_i\}$, on the disturbance sequence $\{e_i\}$ and on the given constants $M,E$
(through the fact that $V$ depends on the latter constants) and satisfies $y(0)=V_{M,E}(x_0)\geq0$.
From \eqref{eq:yTdefn21} we have that %for almost all $t \in [0,\infty)$ 
\begin{align}
\label{eq:doty} 
\dot{y}(t)= \frac{V(x_{k+1})-V(x_k)}{T_k}  \quad \forall t \in \left(t_k,t_{k+1}\right), \forall k \in \N_0 
\end{align}
% Def
and
\begin{align}
 y(t) &\leq V_{M,E}(x_{k}),\quad   \forall t \in \left[t_k,t_{k+1}\right).
  \label{eq:yProperty11}
  \end{align}
By  Claim~\ref{clm:3} and \eqref{eq:X2}, we have that \eqref{eq:diffVkp2}
holds for all $x_k \notin \X_2(\|\{e_i\}\|)$ for all $k\in \N_0$.
Using \eqref{eq:diffVkp2} and \eqref{eq:doty},
for all
$x_k \notin \X_2(\|\{e_i\}\|)$, we have 
\begin{align}
\label{eq:yProperty2}
   \dot{y}(t) &\leq -\alpha(V_{M,E}(x_{k})) \leq -\alpha(y(t) ).
\end{align}
% for 
Hence (\ref{eq:yProperty2}) holds for almost all $t\in [0,t_{k^*})$ with $t_{k^*}=\inf \{t_k: x_k \in \X_2(\|\{e_i\}\|) \}$.
Note that the function $\alpha = \alpha_3 \comp \alpha_2^{-1}$ does not depend on any of the following quantities: $x_0$, $\{T_i\}$, $\{e_i\}$, $M_0$ or $E_0$.
Using Lemma~4.4 of \cite{LinSontagWangSIAM96}, there exists $\beta_1 \in \KL$ such that,
for all $t\in[0,t_{k^*})$ we have
% such
\begin{align}
\label{eq:ult}
y(t) \leq \beta_1\left(y(0),t\right). 
\end{align}
From \eqref{eq:yTdefn21}, $y(t_k)=V_{M,E}(x_k)$ for all $k\in \N_0$.
Evaluating \eqref{eq:ult} at $t=t_k$, then
% By 
\begin{align}
\label{eq:54}
 V_{M,E}(x_k) \leq \beta_1\left(V_{M,E}(x_0),\sum_{i=0}^{k-1} T_i\right), \quad k=0,1,\hdots,k^*-1.
\end{align}
From Claim~\ref{clm:4} and \eqref{eq:X2} if $x_{k} \in  \X_2(\|\{e_i\}\|)$ 
then $V_{M,E}(x_k)\leq \alpha_2 \comp \eta(\|\{e_i\}\|)$.
Combining the latter with \eqref{eq:54}, then 
\begin{align*}
 V_{M,E}(&x_k) \leq \beta_1\left(V_{M,E}(x_0),\sum_{i=0}^{k-1} T_i\right)+  \alpha_2\left( \eta(\|\{e_i\}\|)\right), \quad \forall k\in  \N_0.
  % &\
\end{align*}
Define $\beta \in \KL$ via
$\beta(s,\tau):=\alpha_1^{-1}(2\beta_1(\alpha_2(s),\tau))$, 
and $\gamma \in \K_\infty$ via
$\gamma(s):= \alpha^{-1}_1(2\alpha_2(\eta(s)))$.
Using the fact that $\chi(a+b) \le \chi(2a)+\chi(2b)$ for every $\chi\in\K$
and \eqref{eq:cond_1_semiglobal}
it follows that
% Us
\begin{align}
|x_k| &\leq 
   \beta\left(|x_0|,\sum_{i=0}^{k-1} T_i\right)+\gamma(\|\{e_i\}\|)
\end{align}
for all $|x_0|\leq M_0$, all $\|e_i \| \leq E_0$
and $\{T_i\} \in \Phi (T^\star)$.
We have thus established that \eqref{eq:system1} is S-ISS-VSR.
\end{proof}
%======================================================================
% END PROOF OF THEOREM S-ISS-VSR
%======================================================================

\begin{color}{black}
Theorem~\ref{lem:ISS-VSR=LF} shows that there is no loss of generality in 
the search of a Lyapunov function for a S-ISS-VSR discrete-time model 
since its existence is a necessary condition. The fact that S-ISS-VSR implies
SP-ISS-VSR then shows that Theorem~\ref{lem:ISS-VSR=LF} also provides  sufficient, 
althought not necessary, conditions for SP-ISS-VSR.
In \cite[Theorem~3.2]{valhai_arxiv18} we provided 
checkable sufficient conditions for a discrete-time
model of the form \eqref{eq:fut}, \eqref{eq:system1} to be SP-ISS-VSR.
The conditions in items i), ii) and iii) for \cite[Theorem~3.2]{valhai_arxiv18}
and Theorem~\ref{lem:ISS-VSR=LF} above are identical.
The main difference between these theorems reside in the Lyapunov-type condition:
the quantity $R>0$ that defines the practical nature of
the SP-ISS-VSR property does not exist here and 
the Lyapunov function of the current theorem is only upper bounded
in a compact set defined by $M \geq 0$.
The existence of %characterization 
necessary and sufficient conditions
of the kind of Theorem~\ref{lem:ISS-VSR=LF} 
for the SP-ISS-VSR property remains an open problem.
\end{color}

\section{Example}
\label{sec:examples}

Consider the Euler (approximate) discrete-time model 
of the Example A of \cite{valhai_arxiv18}:
\begin{align}
x_{k+1}=x_k+T_k(x^3_k+u_k)=:F(x_k,u_k,T_k). 
\end{align}
This open-loop Euler model was fed back with the control law
$u_k=U(\hat x_k,T_k)=-\hat x_k-3\hat x_k^3$ and 
additive state-measurement errors $e_k$ were considered,
so that $\hat x_k=x_k+e_k$.
The resulting approximate closed-loop model
$\bar F(x,e,T)=F(x,U(x+e,T),T)$ is
\begin{align}
\label{eq:appmodelerror}
\bar F(x,e,T)=x-T[2x^3+9ex^2+(9e^2+1)x+3e^3+e].
\end{align}
In Example A of \cite{valhai_arxiv18} we established
that \eqref{eq:appmodelerror} is SP-ISS-VSR with respect
to input $e$. We will prove that
\eqref{eq:appmodelerror} is not only SP-ISS-VSR
but also S-ISS-VSR. We make use of 
Theorem~\ref{lem:ISS-VSR=LF}).
The continuity and boundedness assumptions
\ref{enum:lemequiv1}), \ref{enum:lemequiv3}) and \ref{enum:lemequiv2})
of Theorem~\ref{lem:ISS-VSR=LF})
are easy to verify for \eqref{eq:appmodelerror}. 
To prove assumption \ref{item:lyapunov}) define 
$\alpha_1,\alpha_2,\alpha_3, \rho \in \K_\infty$
via $\alpha_1(s)=\alpha_2(s)=s^2$,
$\alpha_3(s)=3s^4 +s^2$ and $\rho(s)=s/K$ with $K>0$ to be selected.
Let $M\ge0$ and $E\ge0$ be given 
and define $V(x)=x^2$.
Then 
\eqref{eq:cond_1_semiglobal}
%\e
is satisfied.
 We have
\begin{align}
  \label{eq:Lyap_1}
  &V(\bar F(x,e,T))-V(x) = \left[h(x,e)+g(x,e)T\right]T, \\
  &h(x,e) = -2x[2 x^3 + 9ex^2 + (9e^2 + 1)x + (3e^3 + e)],\\
  &g(x,e) = [2 x^3 + 9ex^2 + (9e^2 + 1)x + (3e^3 + e)]^2.
\end{align}
Expanding $g(x,e)$, taking absolute values on sign indefinite
terms and noting that whenever $\rho(|e|)\leq |x|$ we have 
$|e|\leq K|x|$ we can bound $g(x,e)$ as
% \
\begin{align*}
g(x,e)\leq a(K)x^6+b(K) x^4+c(K) x^2, \quad \text{if } |e|\leq K|x|,   
\end{align*}
where $a(K)=9 K^6+135 K^4 +174 K^3+117 K^2+90 K +4$,
$b(K)= 6 K^4+24 K^3+36 K^2+22K+4$
and $c(K)=K^2+2K+1$.
Expanding $h(x,e)$, taking absolute values
on sign indefinite terms and bounding
$|e|\leq K|x|$, it follows that 
\begin{align*}
  h(x,e) &\leq -4x^4 - 2 x^2 + 18 K x^4 + (6K^3 |x|^3 + 2K |x|)|x| \\
  &\leq -4x^4 -2x^2 +d(K)x^4 + 2K x^2 
%   &
\end{align*}
where $d(K)=(6K^2+18)K$.
Select $K=0.025$ and
% Sel
$\tilde T= \min \left\{\frac{1}{2b(K)}, \frac{1}{2(a(K)M^4+c(K))} \right\}$.
Then, for all $\frac{|e|}{K}\leq |x|\leq M$,
we can bound \eqref{eq:Lyap_1} as
\begin{align*}
&h(x,e)+g(x,e)T \\
&\leq -4x^4 -2x^2 +d(K)x^4 + 2K x^2 \\
& \qquad\qquad\quad\medspace\medspace\medspace+(a(K)x^6+b(K) x^4+c(K) x^2)T \\
% &= 
&= -3x^4 -x^2 \\
&\quad +x^2\left((b(K)T+d(K)-1)x^2+a(K)x^4T+c(K)T+2K-1\right) \\
&\leq -\alpha_3(|x|) \\
& \quad +x^2\left((b(K) \tilde T+d(K)-1)x^2+(a(K)M^4+c(K))  \tilde T+2K-1\right)  \\
&\leq -\alpha_3(|x|).
\end{align*}
The last inequality holds
because, for the chosen values of $K$ and $\tilde T$,
the expression between parentheses is
less than zero.
Thus, assumption \ref{item:lyapunov}) of Theorem~\ref{lem:ISS-VSR=LF}
is satisfied and the system \eqref{eq:appmodelerror} is S-ISS-VSR.

\section{Conclusions}
\label{sec:conclusions}

We have given necessary and sufficient conditions for two stability properties especially suited to discrete-time models of nonlinear systems under non-uniform sampling. We have given $\epsilon$-$\delta$ and Lyapunov-based characterizations
of robust semiglobal stability (RSS-VSR) and a Lyapunov-type characterization of semiglobal input-to-state stability (S-ISS-VSR), both under \begin{color}{black} non-uniform sampling\end{color}.
% The
We have illustrated the application of the results
on a numerical example for an approximate closed-loop discrete-time model
with additive state measurement disturbances. The \begin{color}{black}provided results\end{color} can be combined
with previous results to ensure stability properties for closed-loop systems whose control law has been designed based on an approximate model.

\appendix

\begin{color}{black}
\section{Discrete-time Model Existence Conditions}
\label{app:dt-mod}

The exact discrete-time model for a given continuous-time nonlinear system \eqref{eq:cs} is the discrete-time system whose state matches the state of the continuous-time system at every sampling instant. If a discrete-time system of the form (\ref{eq:fut}) is the exact model of the system~(\ref{eq:cs}) under non-uniform sampling and zero-order hold, then this fact implies that the function $f$ is such that (\ref{eq:cs}) admits a unique solution from every initial condition $x_0 \in \R^n$. An exact discrete-time model of the form (\ref{eq:fut}) need not be defined for all $(x_k,u_k,T_k) \in \R^n \times \R^m \times \R_{>0}$, even if $f$ in (\ref{eq:cs}) satisfies $f : \R^n \times \R^m \rightarrow \R^n$, i.e. even if $f$ is globally defined, because the solution to (\ref{eq:cs}) with constant $u$ may be not defined for all $t\ge 0$ (the solution may have finite escape time). The following lemma shows that under reasonable boundedness and Lipschitz continuity conditions on $f$, the exact discrete-time model will exist.
\begin{aplemma}
  \label{assu1}
  Let $f : \R^n \times \R^m \to \R^n$ satisfy 
  \begin{enumerate}[a)]
  \item For every pair of compact sets $\X \subset \R^n$ and $\U \subset \R^m$, there exists $C>0$ such that $|f(x,u)|\leq C$ for all $x \in \X$ and $u\in \U$.\label{item:fbnd}
  \item For every compact set $\mathcal{X}\subset \R^n$ and $u\in \R^m$ there exists $L:=L(\X,u)>0$ such that for all $x,y \in \X$,\label{item:Lip}
    \begin{align*}
      |f(x,u)-f(y,u)| &\leq L|x-y|.
    \end{align*}
  \end{enumerate}
  Then, for every $x_0 \in \R^n$ and constant $u(t) \equiv u\in\R^m$, (\ref{eq:cs}) admits a unique maximal (forward) solution $\phi_u(t,x_0)$, defined for all $0\le t< T(x_0,u)$ with $0 < T(x_0,u) \le \infty$. Moreover, for every pair of compact sets $\tilde{\X} \subset \R^n$ and $\U \subset \R^m$, there exists $T^*>0$ such that $T(x_0,u) \ge T^*$ for every $(x_0,u)\in \tilde{\X}\times\U$.
\end{aplemma}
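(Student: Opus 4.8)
The plan is to split the statement into two parts: the existence, uniqueness and maximal-interval assertions, which are the classical Picard--Lindel\"of theory applied to $\dot x=f(x,u)$ with $u\in\R^m$ frozen as a parameter, and the ``moreover'' clause, which is obtained from assumption~\ref{item:fbnd}) through an a~priori estimate that confines the trajectory to a fixed compact set.

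For the first part, I would fix $u\in\R^m$ and put $g_u(x):=f(x,u)$. Assumption~\ref{item:Lip}) makes $g_u$ locally Lipschitz, hence continuous, on $\R^n$, so the standard local existence-and-uniqueness theorem provides, for each $x_0\in\R^n$, a unique solution of $\dot x=g_u(x)$, $x(0)=x_0$, on a nondegenerate interval. Piecing local solutions together (or a Zorn's lemma argument) then yields a unique maximal forward solution $\phi_u(\cdot,x_0)$ on $[0,T(x_0,u))$ with $0<T(x_0,u)\le\infty$, and, because $g_u$ is globally defined on all of $\R^n$, the usual blow-up alternative holds: if $T(x_0,u)<\infty$ then $\phi_u(t,x_0)$ eventually leaves every compact subset of $\R^n$, i.e.\ $\limsup_{t\to T(x_0,u)^-}|\phi_u(t,x_0)|=\infty$. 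These are textbook facts about ordinary differential equations, so I would simply cite them rather than reprove them.

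The substantive step is the ``moreover'' clause. Given compact $\tilde{\X}\subset\R^n$ and $\U\subset\R^m$, I would enlarge $\tilde{\X}$ to its closed unit neighbourhood $\X:=\{x\in\R^n:\mathrm{dist}(x,\tilde{\X})\le 1\}$, which is still compact, and use assumption~\ref{item:fbnd}) to get a single $C>0$ with $|f(x,u)|\le C$ for all $(x,u)\in\X\times\U$. I would then claim that $T^*:=1/C$ works. Fix $x_0\in\tilde{\X}$ and $u\in\U$, and suppose for contradiction that $T(x_0,u)<1/C$. A continuity/bootstrap argument shows $\phi_u(t,x_0)\in\X$ for every $t\in[0,T(x_0,u))$: as long as the trajectory has stayed in $\X$ up to time $t$, one has $|\phi_u(t,x_0)-x_0|\le\int_0^t|f(\phi_u(s,x_0),u)|\,ds\le Ct<C\,T(x_0,u)<1$, hence $\mathrm{dist}(\phi_u(t,x_0),\tilde{\X})<1$, so the trajectory cannot have reached the boundary of $\X$; setting $t_1:=\sup\{t\in[0,T(x_0,u)):\phi_u(s,x_0)\in\X\text{ for all }s\in[0,t]\}$, this forces $t_1=T(x_0,u)$. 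Thus $\phi_u(\cdot,x_0)$ is confined to the compact set $\X$ on its entire maximal interval, which contradicts the blow-up alternative for a finite $T(x_0,u)$. Therefore $T(x_0,u)\ge 1/C=:T^*$ for all $(x_0,u)\in\tilde{\X}\times\U$, as required.

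I expect the main obstacle --- a modest one, since the lemma essentially repackages classical ODE theory --- to be making the lower bound uniform in $x_0$ \emph{and} $u$ at once; this is precisely what assumption~\ref{item:fbnd}) is for, and the enlargement of $\tilde{\X}$ by a fixed margin is the device that lets the a~priori bound $|\dot\phi_u|\le C$ close the bootstrap. A secondary point worth stating explicitly is that only local Lipschitzness in $x$ for fixed $u$ --- not joint continuity of $f$ in $(x,u)$ --- is assumed; this is harmless because both the Picard--Lindel\"of step and the a~priori estimate are carried out at a fixed $u$, with uniformity over $u$ entering only through the constant $C$ of assumption~\ref{item:fbnd}).
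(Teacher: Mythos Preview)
Your proposal is correct and follows essentially the same approach as the paper: cite standard Picard--Lindel\"of theory for existence/uniqueness, then enlarge $\tilde{\X}$ to a compact $\X$, use assumption~\ref{item:fbnd}) to bound $|f|$ by $C$ on $\X\times\U$, and obtain an a~priori estimate confining the trajectory to $\X$ on $[0,T^*)$. The only cosmetic differences are that the paper takes $\X$ to be a ball of radius $2r$ with $r=\max\{1,\max_{x\in\tilde\X}|x|\}$ and $T^*=r/C$, arguing directly rather than via the blow-up alternative, whereas you use the closed unit neighbourhood and $T^*=1/C$; both choices work.
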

\begin{proof}
  Existence and uniqueness of the solution follows from standard results on differential equations (e.g. \cite{hale_book80}) given the Lipschitz continuity assumption \ref{item:Lip}) and noticing that the solution corresponds to a constant $u(t)$ in (\ref{eq:cs}).

  Next, consider compact sets $\tilde{\mathcal{X}}\subset\R^n$ and $\mathcal{U}\subset\R^m$. In correspondence with $\tilde{\X}$ define $r := \max\big\{1,\max_{x\in\tilde{\X}} |x|\big\}$ and $\X := \{x\in\R^n : |x| \le 2r\}$. Since $\tilde\X$ is compact, then it is bounded and $r<\infty$. Let condition~\ref{item:fbnd}) generate $C>0$ in correspondence with $\mathcal{X}$ and $\mathcal{U}$. Define $T^*:=r/C$ and note that $T^* > 0$ because $r>0$ and $C>0$. Let $x_0\in \tilde{\X}$, $u\in \mathcal{U}$
and let $\phi_u(t,x_0)$ denote the unique solution to (\ref{eq:cs}), where $u(t) \equiv u$ is constant. Then
\begin{align*}
\phi_u(t,x_0)&= x_0+ \int_{0}^t {f(\phi_u(s,x_0),u)} ds 
\end{align*}
Considering condition~\ref{item:fbnd}), we have, for all $t\in [0,T^*)$,
\begin{align*}
|\phi_u(t,x_0)|& \leq |x_0| + \int_{0}^t  |{f(\phi_u(s,x_0),u)} | ds \leq r + C t < 2r
\end{align*}
and hence $\phi_u(t,x_0)$ exists and remains within $\X$ for all $t\in [0,T^*)$. This establishes that the maximal (forward) existence time $T(x_0,u)$ for the solution $\phi_u(t,x_0)$ is not less than $T^*$. 
\end{proof}

\end{color}
\section{Proof of Lemma~\ref{lemma:equiva}}
\label{A}
%\newpage
%========================================================
% BEGIN PROOF 
%========================================================
%\begin{p

\textbf{($\ref{it:RSSVSR}.\Rightarrow2.$)} 
Let $\beta\in\KL$ be given by the RSS-VSR property. Consider $M> 0$ and let $T^\blacktriangle = T^\blacktriangle(M)>0$. 

Let $\alpha\in\Ki$ be defined via $\alpha(s) := \beta(s,0)$. Let $\epsilon > 0$ and take $\delta = \alpha^{-1}(\epsilon) > 0$. 
Let $x_k$ denote a solution to (\ref{eq:sysd}) satisfying $|x_0| \le \min \{ \delta, M\}$ and corresponding to 
$\{T_i\} \in \Phi(T^\blacktriangle)$ and $\{d_i\} \in \D$. From~(\ref{eq:semiglobal_cond}), we have $|x_k|\leq \beta(|x_0|,0) \le \beta(\delta,0) = \epsilon$, for all $k \in \N_0$. Then, \ref{enum:lemequiv1}) holds.  

Define $C(M,L) := \beta(M,0)$. Let $x_k$ denote a solution to (\ref{eq:sysd}) satisfying $|x_0| \leq M$
and corresponding to $\{T_i\} \in \Phi(T^\blacktriangle)$  and $\{d_i\} \in \D$. From~(\ref{eq:semiglobal_cond}), then $|x_k| \le \beta(|x_0|,0) \le \beta(M,0) = C(M,L)$ for all $k\in \N_0$. 
Then, \ref{enum:lemequiv3}) holds.

Let $\epsilon>0$ and select $\mathcal{T}\ge 0$ such that $\beta(M,\mathcal{T}) \le \epsilon$. Let $x_k$ denote a solution to (\ref{eq:sysd}) 
satisfying $|x_0|\le M$, $\{T_i\} \in \Phi(T^\blacktriangle)$ and $\{d_i\} \in \D$. From~(\ref{eq:semiglobal_cond}), we have $|x_k|\leq \beta(|x_0|,\sum_{i=0}^{k-1}T_i) \le \beta(M,\mathcal{T}) \le \epsilon$,
for all $k\in \N_0$ for which $\sum_{i=0}^{k-1}T_i\geq \mathcal{T}$. Then, \ref{enum:lemequiv2}) holds.

\textbf{($1.\Leftarrow 2.$)} 
Let $M\geq0$ and $T^\blacktriangle(M) > 0$ be such that conditions \ref{enum:lemequiv1})--\ref{enum:lemequiv2}) hold. 
Let $\bar\delta(\epsilon) := \sup \{\delta : \delta \text{ corresponds to $\epsilon$ as in \ref{enum:lemequiv1})}\}$, the supremum of all applicable $\delta$. Then $\bar\delta(\epsilon) \le \epsilon$ for all $\epsilon > 0$, 
and $\bar \delta : \R_{> 0} \to \R_{>0}$ is positive and non-decreasing. Let $0 < c < 1$. Then, there exists $\alpha\in \K$ such that 
% $\
$\alpha(s)\leq c \bar \delta(s)$. 
Define $c_1 =\lim_{s \rightarrow \infty} \alpha(s)$, then $\alpha^{-1}:[0,c_1)\rightarrow \R_{\ge0}$. 
From \ref{enum:lemequiv1}), we know that
$|x_0|\leq \alpha(\epsilon) \le \min \{c \bar \delta(\epsilon),M\} < \bar\delta(\epsilon) \medspace  \Rightarrow \medspace |x_k|\leq \epsilon$, 
for all  $k \in \N_0$, all $\{T_i\} \in \Phi(T^\blacktriangle(M))$ and all $\{d_i\} \in \D$.
Choosing $\epsilon=\alpha^{-1}(|x_0|)$ when $|x_0| < c_1$, 
it follows that whenever $|x_0| < c_1$ and $|x_0|\leq M$, $k\in\N_0$, $\{T_i\} \in \Phi(T^\blacktriangle(M))$ and $\{d_i\} \in \D$, then
\begin{align}
  \label{eq:alpinv1}
  |x_k| &\le \alpha^{-1}(|x_0|).
\end{align}
Next, define 
\begin{align*}
  \underbar{$C$}(M,L) &:= \inf \{ C : C \text{ corresponds to }M,L \text{ as in \ref{enum:lemequiv3})} \},\\
  \underbar{$\mathcal{T}$}(M,\epsilon) &:= \inf \{\mathcal{T} : \mathcal{T} \text{ corresponds to }M,\epsilon \text{ as in \ref{enum:lemequiv2})} \},
\end{align*}
the infima over all applicable $C$ and $\mathcal{T}$ from conditions \ref{enum:lemequiv3}) and \ref{enum:lemequiv2}), respectively. 
Then, $ \underbar{$C$}$ is nonnegative and nondecreasing in each variable, 
and $  \underbar{$\mathcal{T}$}(M,\epsilon)$ is nonnegative, nondecreasing in $M$ for every fixed $\epsilon > 0$,
and nonincreasing in $\epsilon$ for every fixed $M>0$. 
Given $s>0$, consider $  \underbar{$\mathcal{T}$}(s,1/s)$ 
and $\underbar{$C$}(s,\underbar{$\mathcal{T}$}(s,1/s))$. 
If $x_k$ is a solution to (\ref{eq:sysd}) 
corresponding to an initial condition satisfying $|x_0| \le s$,
$\{T_i\} \in \Phi(T^\blacktriangle(s))$ and $\{d_i\} \in \D$, then
\begin{align}
  \label{eq:xkbndall}
  |x_k| &\le
          \begin{cases}
             \underbar{$C$}(s,\underbar{$\mathcal{T}$}(s,1/s)), &\text{ whenever }\sum_{i=0}^{k-1} T_i < \underbar{$\mathcal{T}$}(s,1/s),\\
            \frac{1}{s}, &\text{ otherwise.}
          \end{cases}
          %,\me
\end{align}
Define $\hat C : \R_{>0} \to \R_{>0}$ via $\hat C(s) := \max\{ \underbar{$C$}(s,\underbar{$\mathcal{T}$}(s,1/s)), 1/s\}$. By the monotonicity properties of \underbar{$C$}, \underbar{$\mathcal{T}$} and $1/s$, there exists $p > 0$ such that $\hat C$ decreases over $(0,p)$ and is nondecreasing over $(p,\infty)$. Therefore there exists $\bar\alpha\in\Ki$ such that 
\begin{align}
  \label{eq:alfabarra}
  \bar \alpha (s) &\ge
          \begin{cases}
             \alpha^{-1}(s), \medspace &\text{if  }   0 \leq s < \frac{c_1}{2},\\
            \hat C(s), \medspace &\text{if  }   \frac{c_1}{2} \leq s.  
          \end{cases}
\end{align}
Then, if $\{T_i\} \in \Phi(T^\blacktriangle(|x_0|))$ and $\{d_i\} \in \D$, by \eqref{eq:alpinv1}--(\ref{eq:alfabarra}), we have $|x_k| \le \bar\alpha(|x_0|)$ for all  $k\in \N_0$.
Consequently, if $\{T_i\} \in \Phi(T^\blacktriangle(M))$ and $\{d_i\} \in \D$ %we
\begin{align}
  \label{eq:cond1_equiv}
  \bar\alpha(M) \le \epsilon \text{ and } |x_0| \le M \medspace \medspace \Rightarrow \medspace\medspace |x_k| \le \epsilon, \medspace \forall k\in \N_0,
\end{align}
and, by \ref{enum:lemequiv2}),% we
\begin{align}
  \label{eq:x0undT}
  \sum_{i=0}^{k-1}T_i >  \underbar{$\mathcal{T}$}(M,\epsilon) \text{ and } |x_0|\le M \medspace \medspace \Rightarrow \medspace \medspace
  |x_k|\leq \epsilon.%,  \meds
\end{align}
Define $\tilde\beta : \R_{\geq 0} \times \R_{\geq 0} \to \R_{\ge 0}$ via
\begin{align}
  \label{eq:defbeta1}
  \tilde\beta(r,t) &:= \inf \{\epsilon : \underbar{$\mathcal{T}$}(r,\epsilon) \leq t \}.
\end{align}

\begin{claim}
  \label{clm:beta}
  There exists $\beta\in\KL$ such that $\beta \geq \tilde \beta$. 
\end{claim}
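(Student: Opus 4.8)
The plan is to first read off from the definition \eqref{eq:defbeta1} the structural properties of $\tilde\beta$ and then majorize it by a genuine $\KL$ function by a double smoothing. From the monotonicity of $\underbar{$\mathcal{T}$}$ established just above the claim (nondecreasing in its first argument, nonincreasing in its second) it follows at once that $\tilde\beta(\cdot,t)$ is nondecreasing for each fixed $t$ and $\tilde\beta(r,\cdot)$ is nonincreasing for each fixed $r$. Condition~\ref{enum:lemequiv2}) guarantees $\underbar{$\mathcal{T}$}(r,\epsilon)<\infty$ for every $r\ge0$ and $\epsilon>0$, so $\tilde\beta(r,t)\to0$ as $t\to\infty$ for each fixed $r$. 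Applying condition~\ref{enum:lemequiv1}) with $M=0$ forces $x_0=0\Rightarrow x_k\equiv0$, hence $\underbar{$\mathcal{T}$}(0,\epsilon)=0$ for all $\epsilon>0$ and therefore $\tilde\beta(0,\cdot)\equiv0$; and the bound $|x_k|\le\bar\alpha(|x_0|)$ derived above (see \eqref{eq:alpinv1}--\eqref{eq:alfabarra}, with $\bar\alpha\in\Ki$) shows that $\underbar{$\mathcal{T}$}(r,\bar\alpha(r))=0$, whence $\tilde\beta(r,0)\le\bar\alpha(r)<\infty$ for all $r\ge0$. Thus $\tilde\beta:\R_{\ge0}\times\R_{\ge0}\to\R_{\ge0}$ is finite, nondecreasing in the first variable, nonincreasing in the second, vanishes on $\{0\}\times\R_{\ge0}$, is bounded by $\bar\alpha(r)$ on each vertical line, and decays to $0$ in $t$.

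Next I would upgrade $\tilde\beta$ to a $\KL$ function by setting
\begin{align*}
  g(r,t) &:= \int_{t-1}^{t}\tilde\beta(r,\max\{s,0\})\,ds,\\
  \beta(r,t) &:= \int_{1}^{2} g(ur,t)\,du + \bar\alpha(r)\,e^{-t}.
\end{align*}
The inner temporal average over $[t-1,t]$ is chosen because $\tilde\beta(r,\cdot)$ is nonincreasing: it makes $g$ continuous (indeed locally Lipschitz) and nonincreasing in $t$, nondecreasing in $r$, and it dominates $\tilde\beta$ since $g(r,t)\ge\int_{t-1}^{t}\tilde\beta(r,t)\,ds=\tilde\beta(r,t)$. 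The outer spatial average over $u\in[1,2]$ is chosen because $\tilde\beta(\cdot,s)$ is only monotone, possibly discontinuous: it makes $r\mapsto\beta(r,t)$ continuous while preserving $\beta(r,t)\ge g(r,t)\ge\tilde\beta(r,t)$ (as $g(ur,t)\ge g(r,t)$ for $u\ge1$). The additive term $\bar\alpha(r)e^{-t}$ is harmless for domination and forces the two \emph{strict} monotonicities: $\beta(\cdot,t)$ is (nondecreasing) $+$ (strictly increasing), hence lies in $\K$ once one checks $\beta(0,t)=0$ (true since $g(0,\cdot)\equiv0$ and $\bar\alpha(0)=0$) and continuity at the origin ($\beta(r,t)\le g(2r,t)+\bar\alpha(r)e^{-t}\le2\bar\alpha(2r)\to0$), while $\beta(r,\cdot)$ is (nonincreasing) $+$ (strictly decreasing) for $r>0$; the bound $g(ur,t)\le\tilde\beta(ur,0)\le\bar\alpha(2r)$ provides a $t$-independent integrable majorant, so dominated convergence gives $\beta(r,t)\to0$ as $t\to\infty$. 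Hence $\beta\in\KL$ and $\beta\ge\tilde\beta$, which is the claim.

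I expect the main obstacle to be coping with the \emph{lack of continuity} of $\tilde\beta$: since it arises as a generalized inverse of $\underbar{$\mathcal{T}$}$, it is guaranteed only to be monotone (hence continuous off a countable set) in each argument separately, so one cannot simply mollify it uniformly; the two averagings must go in the right directions — backward in $t$, where $\tilde\beta$ decreases, and forward/upward in $r$, where it increases — precisely so as to \emph{preserve} the inequality $\beta\ge\tilde\beta$ while gaining continuity and strict monotonicity. The remaining points — finiteness of all the integrals (covered by $\tilde\beta(r,0)\le\bar\alpha(r)$), joint continuity and the $\K$-property of $\beta(\cdot,t)$ at $r=0$, and the degenerate line $r=0$ where $\beta(0,\cdot)\equiv0$ — are routine once the construction is fixed.
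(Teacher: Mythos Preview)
Your argument is correct, but it takes a genuinely different route from the paper's. The paper does not construct $\beta$ by hand; instead it verifies that $\tilde\beta$ satisfies the two hypotheses of \cite[Lemma~15]{Albertini99} (namely: for every $\epsilon_1>0$ there is $\delta>0$ with $\tilde\beta(r,t)<\epsilon_1$ for all $r\le\delta$ and all $t\ge0$; and for every $\epsilon_2>0$ and $M_2>0$ there is $\tilde{\mathcal{T}}$ with $\tilde\beta(r,t)<\epsilon_2$ for all $r\le M_2$ and $t\ge\tilde{\mathcal{T}}$), and then invokes that lemma to obtain a $\KL$ majorant. Both hypotheses are read off directly from \eqref{eq:cond1_equiv} and \eqref{eq:x0undT}, so the paper's proof is about three lines once the external lemma is granted.

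Your approach, by contrast, is fully constructive: after extracting the monotonicity, finiteness ($\tilde\beta(r,0)\le\bar\alpha(r)$), vanishing on $\{0\}\times\R_{\ge0}$, and decay in $t$, you manufacture $\beta$ explicitly via a backward temporal average (to gain continuity in $t$ while preserving $\ge\tilde\beta$), a forward spatial average (to gain continuity in $r$ while preserving $\ge\tilde\beta$), and an additive $\bar\alpha(r)e^{-t}$ term to force strict monotonicity in both arguments. This is essentially a self-contained proof of the cited Albertini--Sontag lemma specialized to the present monotone setting. The trade-off is clear: the paper's proof is shorter but leans on an external reference, while yours is longer but portable and yields an explicit formula for $\beta$. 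Both are valid for the claim as stated.
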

\indent\emph{Proof of Claim~\ref{clm:beta}:}
We will prove that  $\tilde \beta$ satisfies the 
conditions of \cite[Lemma~15]{Albertini99}.

Consider $ \epsilon_1>0$ given. Let $0<c<1$ and choose
$\delta:=  \bar \alpha^{-1}(c\epsilon_1)$.
Then, from \eqref{eq:cond1_equiv},
for all $r:=|x_0| \leq \delta$ we have that 
$|x_k|\leq c \epsilon_1$ for all $k\in \N_0$, all $\{T_i\} \in \Phi(T^\blacktriangle(\delta))$ and all $\{d_i\} \in \D$,
thus $ \underbar{$\mathcal{T}$}(r, \epsilon_1)=0$.
Therefore $\tilde \beta(r,t) \leq c \epsilon_1 < \epsilon_1$ for all $0 \leq  r \leq \delta$
and all $t\geq 0$.

Consider $ \epsilon_2>0$ and  $M_2>0$ given.
Let $0<c<1$, then $\tilde{\mathcal{T}}:= \underbar{$\mathcal{T}$}( M_2,c \epsilon_2)\geq\underbar{$\mathcal{T}$}(r,\epsilon)$
for all $0 \leq r\leq  M_2$ and $\epsilon \geq c \epsilon_2$.
By \eqref{eq:x0undT}, for all $\{T_i\} \in \Phi(T^\blacktriangle( M_2))$ and all $\{d_i\} \in \D$, we have
$\tilde\beta(r, t)=\inf \{\epsilon : \underbar{$\mathcal{T}$}(r,\epsilon) \leq t \} \leq c \epsilon_2< \epsilon_2$
for all $0 \leq r\leq  M_2$ and all $t\geq \tilde{\mathcal{T}}$.

By \cite[Lemma~15]{Albertini99} there exists $\beta \in \KL$ such that
$\tilde \beta(r,t) \leq \beta (r,t)$ for all $r\geq0$ and all $t\geq0$ . \mer

From~\eqref{eq:x0undT} and (\ref{eq:defbeta1}) 
then if $|x_0| \le M$ we have, for all $k \in \N_0$ such that
$\sum_{i=0}^{k-1} T_i \ge t$
with $\{T_i\} \in \Phi(T^\blacktriangle( M))$ and $\{d_i\} \in \D$,   
that $|x_k| \le \tilde\beta(M,t) \le \beta(M,t)$. 
Consequently,
\begin{align*}
  |x_k| &\le \beta\left(|x_0|,\sum_{i=0}^{k-1} T_i \right),
\end{align*}
for all $k\in \N_0$, all $\{ T_i\} \in \Phi(T^\blacktriangle(M))$, all $|x_0|\leq M$
and all $\{d_i\} \in \D$,
which establishes that the system \eqref{eq:sysd} is RSS-VSR. 
%\end
%========================================================
% END PROOF 
%========================================================

%======================================
% BEGIN PROOF
%======================================
% \sectio

(\ref{it:RSSVSR}. $\Rightarrow$ \ref{it:boundedV}.) 
Let $\beta \in \KL$ and $T^\blacktriangle(\cdot)$ be given by the RSS-VSR property. Without loss of generality, suppose that $T^\blacktriangle$ is nonincreasing (recall Remark~\ref{rem:Ttriangle}).
  % then 
It follows from \cite[Lemma~7]{Kellett2014} that there exist $\alpha_1, \alpha_2 \in \K_\infty$ such that
\begin{align}
  \label{eq:cond_1_sglobal}
  \alpha_1(\beta(s,t)) \leq \alpha_2(s) e^{-3t}, \quad \forall s\geq 0, t\geq0.
\end{align}
Define $\alpha_3 := \alpha_1$, let $\bar{T}>0$ be such that
\begin{align}
  \label{eq:condtime2}
  T\leq 1-e^{-2T}, \quad \forall T\in (0,\bar{T}),
\end{align}
and define $T^* : \R_{>0} \to \R_{>0}$ via $T^*(s) :=\min \{ T^\blacktriangle(\beta(s,0)), \bar{T} \}$. Since $s \le \beta(s,0)$ for all $s\ge 0$ and $T^\blacktriangle$ is nonincreasing, it follows that $T^*(s) \le T^\blacktriangle(s)$, 
and hence $\Phi(T^*(s)) \subset \Phi(T^\blacktriangle(s))$, for all $s > 0$. Let $x(k,\xi,\{d_i\},\{T_i\})$ denote the solution of \eqref{eq:sysd} 
at instant $k \in \N_0$ that corresponds to a sampling period sequence $\{T_i\}$, the disturbance sequence $\{d_i\}$ and 
satisfies $x(0,\xi,\{d_i\},\{T_i\}) = \xi$. 
Note that $x(k,\xi,\{d_i\},\{T_i\})$ may be not defined for arbitrary $(k,\xi,\{d_i\},\{T_i\})$.
% Def
For $M\ge0$ and $\xi\in\R^n$, define the set
\begin{multline}
  \label{eq:exS}
  S(M,\xi) := \big\{ (k,\{d_i\},\{T_i\}) \in \N_0 \times \mathcal{D} \times\Phi(T^*(M)):   \\  x(k,\xi,\{d_i\},\{T_i\}),  \text{ is defined}\big\},
\end{multline}
and the function $V_M : \R^n \to \R_{\ge 0} \cup \{\infty\}$ via
\begin{align}
  \label{eq:LF}
  V_M(\xi):=\sup_{(k,\{d_i\},\{T_i\}) \in S(M,\xi)} \alpha_1( |x(k,\xi,\{d_i\},\{T_i\})|) e^{2\sum_{i=0}^{k-1}T_i}.
\end{align}
Note that if $\{d_i\} \in \mathcal{D}$ and $\{T_i\} \in \Phi(T^*(M))$, then $(0,\{d_i\},\{T_i\}) \in S(M,\xi)$ for all $\xi\in\R^n$ because $x(0,\xi,\{d_i\},\{T_i\}) = \xi$ is defined. 
Hence, $S(M,\xi) \neq \emptyset$ for all $M\ge0$ and $\xi\in\R^n$. We next show that item~\ref{it:boundedV}) is satisfied with $\alpha_1,\alpha_2,\alpha_3,T^*$ 
and $V_M$ as defined.
%\co

From (\ref{eq:LF}), it follows that
\begin{align}
  \label{eq:LF1}
  V_M(\xi) \geq \alpha_1(|x(0,\xi,\{d_i\},\{T_i\})|)=\alpha_1(|\xi|),
\end{align}
holds for all $\xi \in \R^n$. % an
Then \eqref{eq:cond_1_semiglobal_a32} follows.
Consider $M\ge 0$. 
For $|\xi| \le M$ and $(k,\{d_i\},\{T_i\}) \in S(M,\xi)$, 
it follows that $\{T_i\} \in \Phi(T^*(M)) \subset \Phi(T^\blacktriangle(M))$. 
Therefore, $|\xi| \le M$ and $(k,\{d_i\},\{T_i\}) \in S(M,\xi)$ imply that (\ref{eq:semiglobal_cond}) is satisfied. 
Using (\ref{eq:semiglobal_cond}) and~\eqref{eq:cond_1_sglobal} in \eqref{eq:LF} then, for all $|\xi|\leq M$ we have
\begin{align}
  \label{eq:LF2}
  V_M(\xi) &\leq \sup_{(k,\{d_i\},\{T_i\}) \in S(M,\xi)} \alpha_1\left[ \beta\left(|\xi|,\sum_{i=0}^{k-1}T_i\right)\right] e^{2\sum_{i=0}^{k-1}T_i} \notag \\
           &\leq \sup_{(k,\{d_i\},\{T_i\}) \in S(M,\xi)} \alpha_2(|\xi|)e^{-\sum_{i=0}^{k-1}T_i} = \alpha_2(|\xi|), 
\end{align}
whence \eqref{eq:cond_1_semiglobal_b32} is established. 
  % and
Let $M\ge 0$ and define $\tilde M := \beta(M,0)$. By the definition of RSS-VSR and the fact that $\Phi(T^*(M)) \subset \Phi(T^\blacktriangle(\tilde M)) \subset \Phi(T^\blacktriangle(M))$, it follows that 
$|x(j,\xi,\{d_i\},\{\bar T_i\})| \leq \tilde M$ for all $j \in \N_0$ whenever $|\xi| \le M$, $\{d_i\} \in \mathcal{D}$ and $\{\bar T_i\} \in \Phi(T^*(M))$.
Therefore, 
$S(M,x(j,\xi,\{d_i\},\{\bar T_i\})) = \N_0 \times \mathcal{D} \times\Phi(T^*(M))$ for all $j \in \N_0$ whenever $|\xi|\le M$, $\{d_i\} \in \mathcal{D}$
and $\{\bar T_i\} \in \Phi(T^*(M))$. % The
% \be
Thus, for all $|\xi| \leq M$, all $|d|\leq D$ and all $T\in (0,T^*(M))$, we have $S(M,\bar F(\xi,d,T)) = \N_0 \times \mathcal{D}  \times \Phi(T^*(M))$, and
\begin{align}
  \lefteqn{V_{M}(\bar F(\xi,d,T)) =}\hspace{3mm} \notag \\
&=\sup_{(k,\{d_i\},\{T_i\}) \in \N_0 \times \mathcal{D}\times\Phi(T^*(M))} \alpha_1\left(\big|{\scriptstyle x\big(k,\bar F(\xi,d,T),\{d_i\},\{T_i\}\big)}\big|\right)e^{2\sum_{i=0}^{k-1}T_{i}} \notag\\
&=\sup_{(\ell,\{d_i\},\{T_i\}) \in \N \times \mathcal{D}\times \Phi(T^*(M))} \alpha_1\left( \big|{\scriptstyle x \big(\ell, \xi,\big\{d,\{d_i\} \big\},\big\{T,\{T_i\} \big\} \big)} \big| \right) e^{2 \sum_{i=0}^{\ell-2}T_i} \notag \\
&\leq e^{-2T}\sup_{(\ell,\{d_i\},\{T_i\}) \in \N_0 \times \mathcal{D} \times \Phi(T^*(M))} \alpha_1(|x(\ell,\xi,\{d_i\},\{T_i\})|)e^{2\sum_{i=0}^{\ell-1}T_i} \notag \\
  \label{eq:Vdec}
&\le V_{M}(\xi) e^{-2T}  
\end{align}
where we have used the facts that $e^{2\sum_{i=0}^{l-1}T_i} = e^{2(T+\sum_{i=0}^{l-1}T_i)} e^{-2T}$,
 $\big\{ d, \{d_i\} \big\} \in \mathcal{D}$ and $\big\{ T, \{T_i\} \big\} \in \Phi(T^*(M))$. 
From  \eqref{eq:condtime2}, \eqref{eq:Vdec},
and the fact that $T\in (0,T^*(M)) \subset (0,\bar T)$, then for $|\xi| \le M$ we have
\begin{align*}
  &V_M(\bar F(\xi,d,T)) \leq V_M(\xi)-V_M(\xi)(1-e^{-2T}),\quad\text{hence} \\
  &V_M(\bar F(\xi,d,T))-V_M(\xi) \leq -TV_M(\xi) \le -T \alpha_3(|\xi|),
\end{align*}
where we have used \eqref{eq:LF1}. Then \eqref{eq:cond_2_semiglobal_32} follows and
(\ref{it:RSSVSR}. $\Rightarrow$ \ref{it:boundedV}.) is established.

%======================================
% LEMMA: SS-VSR (<=)
%======================================
(\ref{it:boundedV}. $\Rightarrow$ \ref{it:RSSVSR}.)
Define $T^\blacktriangle = T^* \comp \alpha_1^{-1}\comp\alpha_2$ and $\alpha:=\alpha_3 \comp \alpha_2^{-1}$. Let $\beta_1 \in \KL$ correspond to $\alpha$ as per Lemma~4.4 of \cite{LinSontagWangSIAM96}, and define $\beta\in\KL$ via
\begin{align}
  \label{eq:defbeta}
  \beta(s,t) &= \alpha_1^{-1}\big( \beta_1(\alpha_2(s),t) \big).
\end{align}
We next show that $\beta$ and $T^\blacktriangle$ as defined characterize the RSS-VSR property of (\ref{eq:sysd}). Let $\check M \ge 0$ and consider $\{T_i\} \in \Phi(T^\blacktriangle(\check M))$ and $|x_0| \le \check M$. We have to show that (\ref{eq:semiglobal_cond}) holds for all $k\in\N_0$ when $x_k := x(k,x_0,\{d_i\},\{T_i\})$ denotes the solution to \eqref{eq:sysd} corresponding to the initial condition $x_0$, disturbance sequence $\{d_i\}$ and the sampling period sequence $\{T_i\}$. 
Define $M := \alpha_1^{-1} \comp \alpha_2(\check M)$, so that $T^\blacktriangle(\check M) = T^*(M)$. Note that $M\ge \check M$. Define
%Con
\begin{align}
  \X_M &:=\{ x\in\R^n : V_M(x) \leq \alpha_1(M)\}.
\end{align}
%Then 
From \eqref{eq:cond_1_semiglobal_a32}, 
if $x_k$ satisfies $V_M(x_k) \le \alpha_1(M)$, then $|x_k| \le M$, 
and from \eqref{eq:cond_2_semiglobal_32}, then $V_M(x_{k+1}) \le V_M(x_k) \le \alpha_1(M)$. 
Since $x_0$ satisfies $|x_0| \le \check M$,
from \eqref{eq:cond_1_semiglobal_b32} then $V_M(x_0)\leq \alpha_2(\check M)=\alpha_1(M)$, and hence
$x_k \in \X_M$ and $|x_k| \le M$ hold for all $k\ge 0$.
From~\eqref{eq:cond_1_semiglobal_b32}, it follows that $\alpha_2^{-1}(V(x_k))\leq |x_k|$. Using this inequality in \eqref{eq:cond_2_semiglobal_32}, then
\begin{align}
  \label{eq:diffVkp1}
  V_M(x_{k+1}) - V_M(x_{k}) &\le -T_k \alpha(V_M(x_{k})). 
\end{align}
Let $t_k = \sum_{i=0}^{k-1} T_i$ for every $k\in\N_0$. Consider the function
\begin{multline}
  \label{eq:yTdefn} 
  y(t) := V_M(x_{k}) + \frac{t- t_k}{T_{k}} \left[V_M(x_{k+1}) - V_M(x_{k}) \right],\\
  \text{if } t \in \left[t_k,t_{k+1}\right).
\end{multline}
Note that the function $y(\cdot)$ depends on the initial condition $x_0$,
on the disturbance sequence $\{d_i\}$,
on the sampling period sequence $\{T_i\}$ and on $M$ (through $V_M$), and satisfies $y(0) = V_M(x_0) \ge 0$. From~(\ref{eq:yTdefn}), it follows that
\begin{align}
  \dot{y}(t) = \frac{V_M(x_{k+1}) - V_M(x_{k})}{T_k},  \quad \forall t \in (t_k,t_{k+1}), \forall k\ge 0.
\end{align}
Using \eqref{eq:diffVkp1} and  \eqref{eq:yTdefn} we have , for $|x_0| \leq \check M$, 
\begin{align}
  y(t) &\leq V_M(x_{k}),\quad   \forall t \in \left[t_k,t_{k+1}\right),\quad \text{and}\\
  \label{eq:yProperty}
  \dot{y}(t) &\leq -\alpha(V_M(x_{k})) \leq -\alpha(y(t) ),
\end{align}
where (\ref{eq:yProperty}) holds
for almost all $t\in [0,\T_{\{T_i\}})$, with $\T_{\{T_i\}} := \sum_{i=0}^\infty T_i \in \R_{>0} \cup \{\infty\}$.
By Lemma~4.4 of \cite{LinSontagWangSIAM96}, then for all $t\in[0,\T_{\{T_i\}})$, we have
\begin{align}
  \label{eq:yineq}
  y(t) &\leq \beta_1(y(0),t).
\end{align}
From (\ref{eq:yTdefn}), $y(t_k) =V_M(x_{k})$ for all $k\in\N_0$. Evaluating (\ref{eq:yineq}) at $t=t_k$, then
\begin{align}
  \label{eq:cond_v1}
 V_M(x_{k})\leq \beta_1\left( V_M(x_0),\sum_{i=0}^{k-1} T_i \right),\quad \forall k\in\N_0.
\end{align}
Using 
\eqref{eq:cond_1_semiglobal_32}
and \eqref{eq:defbeta}, we conclude that
\begin{align*}
  |x_k| &\leq \beta\left(|x_0|,\sum_{i=0}^{k-1} T_i\right),\quad \text{for all }k\in \N_0. 
\end{align*}
Then, the system \eqref{eq:sysd} is RSS-VSR.\qed
%\e
%======================================
% END PROOF
%======================================

\section*{Acknowledgments}
Work partially supported by ANPCyT grant PICT 2013-0852, Argentina.

\bibliographystyle{elsarticle-num} 
\bibliography{bibliografia201806}

\end{document}